\newcounter{thm}
\newtheorem{theorem}[thm]{Theorem}
\newtheorem{lemma}[thm]{Lemma}
\newtheorem{prop}[thm]{Proposition}
\newtheorem{coro}[thm]{Corollary}
\theoremstyle{definition}
\newcounter{exa}
\newtheorem{example}[exa]{Example}
\newcommand{\tree}{T} %%
\newcommand{\vertex}{V} %%
\newcommand{\edge}{E} %%
\newcommand{\leaf}{L} %%
\newcommand{\str}{\Sigma} %%
\newcommand{\erw}{\mathbb{E}} %%
\newcommand{\era}{\varepsilon} %%
\newcommand{\cov}{\mathrm{Cov}} %%
\newcommand{\coa}{\tau} %%
\newcommand{\wurzel}{\zeta}%%
\newcommand{\dens}{\bm{p}} %%
\newcommand{\oM}{\overline{M}}%%
\newcommand{\wM}{\widehat{M}}%%
\newcommand{\change}[1]{}
\newcommand\steffen[1]{}
\newcommand\volkmar[1]{}
\journal{Math. Biosci.}
\begin{document}
\begin{frontmatter}
\title{An algebraic analysis of the two state Markov model on tripod trees}
\author[otago]{Steffen Klaere\corref{steffen}}
\author[greifswald]{Volkmar Liebscher}
\address[otago]{Department of Mathematics and Statistics, University of Otago, Dunedin, New Zealand}
\address[greifswald]{Institut f\"ur Mathematik und Informatik, Universit\"at Greifswald, Germany}
\cortext[steffen]{sklaere@maths.otago.ac.nz}
\date{\today}

%\maketitle
%%%%%%%%%%%%%%%%%%%%%%%%%%%%%%%%%%%%
%%% Baselineskip
%%%%%%%%%%%%%%%%%%%%%%%%%%%%%%%%%%%%
%\setlength{\baselineskip}{30pt}
%%%%%%%%%%%%%%%%%%%%%%%%%%%%%%%%%%%%

\begin{abstract} Methods of phylogenetic inference use more and more complex models to generate trees from data. However, even simple models and their implications are not fully understood.

Here, we investigate the two-state Markov model on a tripod tree, inferring conditions under which a given set of observations gives rise to such a model. This type of investigation has been undertaken before by several scientists from different fields of research.

In contrast to other work we fully analyse the model, presenting conditions under which one can infer a model from the observation or at least get support for the tree-shaped interdependence of the leaves considered.

We also present all conditions under which the results can be extended from tripod trees to quartet trees, a step necessary to reconstruct at least a topology. Apart from finding conditions under which such an extension works we discuss example cases for which such an extension does not work.
\end{abstract}
%\setpagewiselinenumbers
%\modulolinenumbers[5]

\begin{keyword}
Phylogenetics \sep Identifiability \sep Invariant \sep Two-State-Model
\end{keyword}

\end{frontmatter}

%%%%%%%%%%%%%%%%%%%%%%%%%%%%%%%%%%%%
\linenumbers

\section{Introduction}

In phylogeny, one assumes that the relationship of a set of taxonomic units (or taxa) can be visualised by a (binary) tree. The aim is to derive this tree from the observations at the taxa. From a stochastic modelling point of view, one assigns the taxa to the leaves of a (binary) tree, and assumes that the observations (which are usually considered to be i.i.d. over different sites) are the end results of a Markov process along the tree. The goal is to derive the best combination of tree and Markov model to explain the observations.

This work regards the identifiability problem of this inference. It essentially asks whether it is possible that infinite data sets are able to uniquely identify the transitions on the tree and the tree completely. Note that in the present context, identifiability readily leads to consistency of various methods of estimating the parameters of the model \citep[see][Section 2.2 for an overview]{bryant2005b}.

\change{\citet{chang1996} establishes that a certain class of Markov models on a tree can be reconstructed from the  restrictions of its distribution to all triples of leaves. This class of Markov models is characterised by the form of its transition matrices, i.e. that they are invertible, reconstructable from rows and not permutation matrices.} However, usually one only has an estimate of the leaf distribution such a process induces. This leads to the question of whether one can find (simple) conditions to determine whether a taxon distribution comes from a Markov process. In other words, we ask whether we can validate the model, at least if there are infinitely many data points available.

To approach this problem, we consider a very simple model. We assume that our process can take only one of two states for every site, and that the tree is a tripod tree.

Under these restrictions, we can completely describe the map from the taxon distribution to the parameters of the model, including necessary and sufficient conditions on positivity of the parameters. Thereby, no conditions for reversibility of the processes on the edges are needed. The analysis of the model on tripod trees has immediate consequences  for quartet trees. We derive these conditions to exemplify the shortcomings of an extension from tripods to quartets.

Technically, the generic part of this work is already well-known. Initial work on the two state model from psychology can be found in \citet{lazarsfeld1968}. \citet{pearl1986} used these results in artificial intelligence to algorithmically identify the whole tree behind two-state Markov models. Note that identifiability of Markov models especially in phylogeny was studied in \citet{allman2009,allman2008,allman2003,baake1998a,chang1996}. We add to those results the analysis of the degenerate cases, together with a complete analysis of the quartet tree model.

The typical tool (for multi-state models) to identify a subspace of taxon distributions which might come from a Markovian tree model are phylogenetic invariants \citep{allman2008,allman2003,sturmfels2005,lake1987,cavender1987,evans1993}. Those invariants are polynomials in the taxon distribution which are zero for those distributions that are derived from the model of interest. 

\citet{sumner2008} discuss another very interesting set of invariants, the so-called Markov invariants. These are invariants whose value on a tree scales with the determinants of the Markov matrices on the edges. Thus, Markov invariants indicate simple relations between the observations (the distribution of leaf states) and the model (described by the Markov matrices), and provide conditions on the observations based on properties of the model. We will make use of this property in this work.

In the two-state tripod case there is only one, the trivial invariant.  But, not all leaf distributions are derived from the Markov model. In fact, we derive polynomials that vanish on distributions which satisfy the trivial invariant but are not identifiable under the Markov model. To accommodate this observation we suggest incorporating these polynomials into the set of invariants but with the addition that these polynomials do not vanish for identifiable distributions. We discuss degenerate distributions to describe this observation.

Although most of the leaf distributions allow for complex solutions of the model equations, in order for the solution of the algebraic equation to be  parameters of a Markov model additional inequalities must be fulfilled \citep{zwiernik2010,matsen2009,yang2000}. The approach of \citeauthor{matsen2009} is restricted to the Cavender-Farris-Neyman model \citep[CFN][]{cavender1978,farris1973,neyman1971} to accommodate the Hadamard approach \citep{hendy1989,szekely1993}. \citet{yang2000} investigated the CFN model to explore conditions to obtain solutions for different optimisation problems in phylogeny. Extending our approach we recover the inequalities presented in \citet{pearl1986}.

As a final step we investigate how the results for tripod trees extend to trees of four leaves. The results provide a glimpse at what we can expect from the reconstruction from tripods when we have no knowledge of the identifiability of the given taxon distribution.

The structure of this work is as follows:
In Section \ref{sect.model} we describe the general mutation model on a tree, with specialisation to tripod trees coming in Section \ref{sect.tree}. Section \ref{sect.tripod} deals with the complete solution of the two-state tripod tree model. Then, in Section \ref{sect.quartet} we use  these results to analyse the general two-state Markov model on quartet trees. Section \ref{sect:markovinv} discusses the relation between our work and the concept of Markov invariants, and possible extensions of this work. \change{We provide a short summary of our work in Section \ref{sect:discussion}}] For the sake of readability, proofs are presented in \ref{app:proofs}.

%%%%%%%%%%%%%%%%%%%%%%%%%%%%%%%%%%%%
\section{\label{sect.model}The Markov model of mutation along a tree}

In this section we introduce the general Markov model and its properties. \citet{pearl1986} nicely motivate this model in the following way. Assume, one is given a set $\leaf$ of taxa and a set of observations from a Markov process $X:\,\leaf\to\{\bm{0},\bm{1}\}$. From these observations one deduces a correlation between the taxa. The assumption is that this correlation can be explained by an underlying (binary) tree $\tree=(\vertex,\edge)$ and an extension $Y:\,\vertex\to\{\bm{0},\bm{1}\}$ of $X$ such that for any pair of taxa there is an interior node such that given the state at the interior node the two taxa are independent. See Fig. \ref{fig:arbitrary} for a depiction of this.

%$$$$$$$$$$$$$$$$$$$$$$$$$$$$$$$$$$$
\begin{figure}[htp]
\begin{center}
\includegraphics[width=\textwidth]{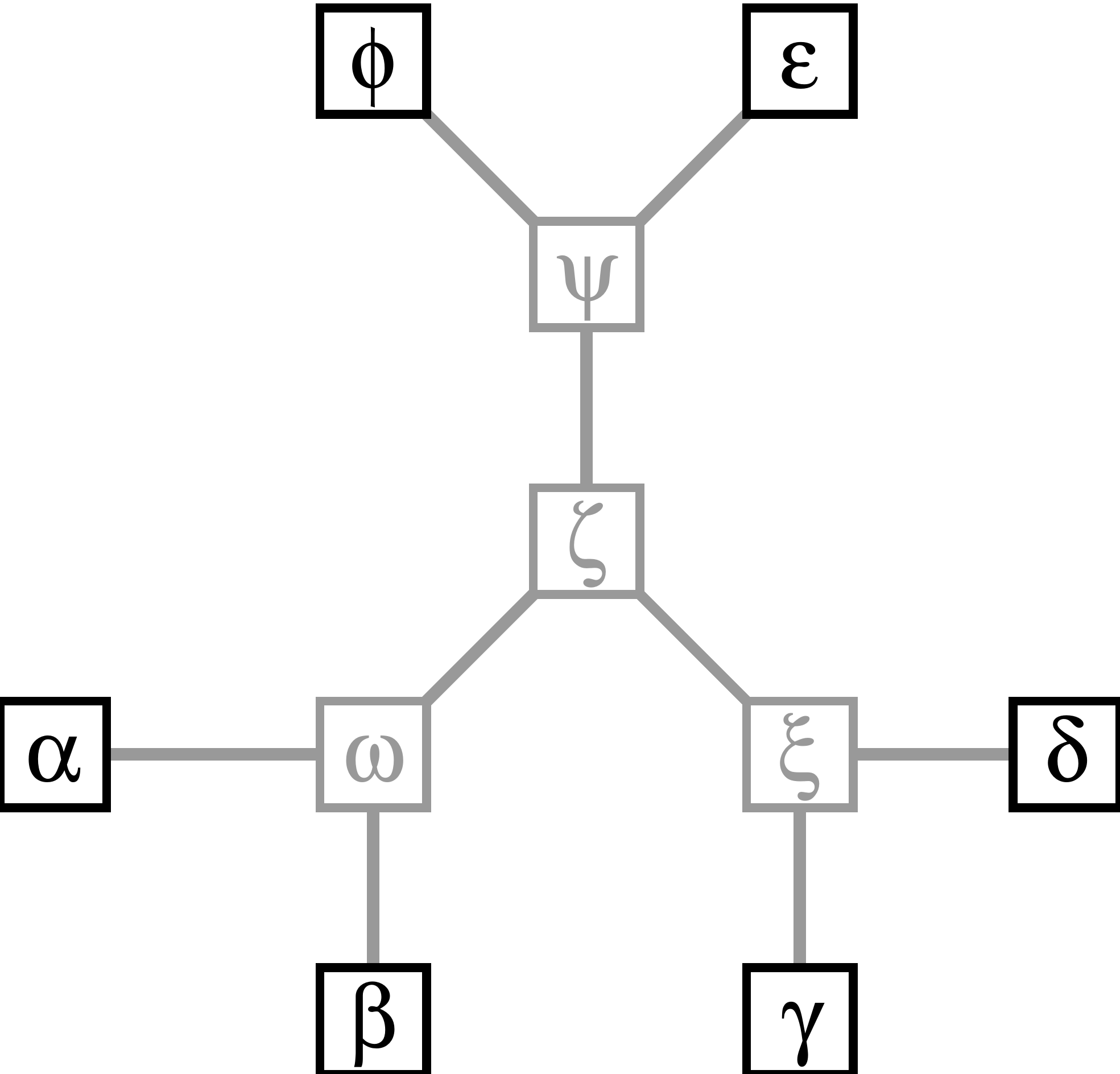}%
\caption{\label{fig:arbitrary}A binary tree with six leaves. Gray lines and nodes describe the hidden part of the process.}
\end{center}
\end{figure}
%$$$$$$$$$$$$$$$$$$$$$$$$$$$$$$$$$$$

Let us look closer at the process $Y$. The independence of pairs of taxa given an interior node on the path between them corresponds to the so-called \emph{directed local Markov property} \citep[e.g.,][Chapter 2]{lauritzen1996}. For this property one has to identify a node $\wurzel\in\vertex$ as the root of the tree and direct all edges away from $\wurzel$. Thus, our tree becomes a directed acyclic graph, and for every node $\beta\in\vertex\setminus\{\wurzel\}$ there is a parent node $\alpha\in\vertex$ (with respect to the root), such that $(\alpha,\beta)\in\edge$. Further, for each node $\beta\in\vertex$ one defines the set its descendants as those nodes $\alpha$ for which the path from the root to $\alpha$ passes through $\beta$. The non-descendants are then the nodes that are neither descendants nor parents.

The directed local Markov property states that conditioned on the state of its parent node the state of a node $\alpha\in\vertex$ is independent of the states of its non-descendants. With this property the joint distribution $\widetilde{\dens}^Y$ has the {\it factorisation property}, i.e. for the joint state $\bm{\chi}\in\{\bm{0},\bm{1}\}^{|\vertex|}$ we get
%-----------------------------------
\begin{linenomath}
\begin{equation}\label{eq:factorization}
\widetilde{p}^Y_{\bm{\chi}}=\Pr[Y_\wurzel=\chi_\wurzel]\prod_{(\alpha,\beta)\in\edge}
\Pr[Y_\beta=\chi_\beta|Y_\alpha=\chi_\alpha]=
q^\wurzel_{\chi_\wurzel}\prod_{(\alpha,\beta)\in\edge}M^{\alpha\beta}_{\chi_\alpha\chi_\beta}.
\end{equation}
\end{linenomath}
%-----------------------------------
Here, the marginal distribution $\bm{q}^\wurzel$ corresponds to the initialisation of the process, i.e. $q^\wurzel_z$ is the probability that the process attains state $z\in\{\bm{0},\bm{1}\}$ at the root. The transition matrices $(\bm{M}^e)_{e\in\edge}$ describe the way the process progresses along an edge.  E.g., for an edge $(\alpha,\beta)\in\edge$ the term $M^{\alpha\beta}_{ab}$ is the probability that the character  $a$ at node $\alpha$ is mutated into character $b$ at node $\beta$.

In summary, the joint probability distribution $\widetilde{\dens}^Y$ is given by the marginal distribution $\bm{q}^\wurzel$ and the transition matrices $(\bm{M}^e)_{e\in\edge}$, and thus such a Markov process is completely characterised by these parameters. We will call $\bm{q}^\wurzel$ and $(\bm{M}^e)_{e\in\edge}$ the \textit{process parameters}.

In general, the actual position of the root node $\wurzel$ is not important for \eqref{eq:factorization}, i.e. $\wurzel$ can be chosen arbitrarily from $\vertex$, including a leaf \citep[e.g.,][]{allman2003}.

We only have partial knowledge on the realisations of the process $Y$ through the process $X$ on the leaves. The joint distribution $\dens^X$ of $X$ can then be inferred from \eqref{eq:factorization} using the law of total probability. Let $\bm{x}\in\{0,1\}^{|\leaf|}$ denote the joint state at the leaves. Then
%-----------------------------------
\begin{linenomath}
\begin{equation}\label{eq:leaftotal}
p^X_{\bm{x}}=\sum_{\substack{\bm{\chi}\in\vertex\\\bm{\chi}|_\leaf=\bm{x}}}\widetilde{p}^Y_{\bm{\chi}}=\sum_{\substack{\bm{\chi}\in\vertex\\\bm{\chi}|_\leaf=\bm{x}}}q^\wurzel_{\chi_\wurzel}\prod_{(\alpha,\beta)\in\edge}M^{\alpha\beta}_{\chi_\alpha\chi_\beta}.
\end{equation}
\end{linenomath}
%-----------------------------------
Note that under the assumption that $X$ comes from a reversible Markov process $Y$ \citet{chang1996} proved that all process parameters can be recovered from all the distributions of the restrictions of $X$ to arbitrary triples of taxa.

If we find process parameters for a joint taxon distribution $\dens$ then we call $\dens$ \emph{tree decomposable}. If the obtained process parameters are unique (up to model-specific symmetries), we call $\dens$ \emph{algebraically identifiable}, and if further the process parameters are marginal and transition probabilities, then $\dens$ is called \emph{stochastically identifiable}. Clearly, any stochastically identifiable distribution is also algebraically identifiable.

Looking at \eqref{eq:leaftotal} we realise that verifying the tree decomposability of a distribution $\dens$ is equivalent to solving a polynomial equation system of $2^{|L|}-1$ independent equations in $4|L|-5$ variables. We observe that the Markov equations are overdetermined for $|L|>3$, i.e. the space of tree decomposable distributions is a proper subspace of the space of all distributions. From this we conclude, that there are conditions that define a tree decomposable distribution. These conditions are generally known as \emph{invariants}, polynomials in $2^{|L|}-1$ variables whose roots are distributions that are tree decomposable. One example of an invariant is
%-----------------------------------
\begin{linenomath}
\begin{equation}\label{eq:trivialinv}
\sum_{\bm{x}\in\{0,1\}^{|\leaf|}}p_{\bm{x}}=1,
\end{equation}
\end{linenomath}
%-----------------------------------
i.e. all probabilities sum to one. This is fittingly called the \emph{trivial invariant}.  \citet{allman2008} provide a complete set of invariants for trees of arbitrary size under a two-state-model, and observe that for complete identification the knowledge of the restrictions to six taxa are necessary.

However, as pointed out in multiple publications \citep[e.g.,][]{pearl1986,matsen2009} such invariants are not sufficient to guarantee tree identifiability. In particular, additional inequalities are needed.

Here, we are not only interested in recapturing invariants and inequalities. In addition, we also investigate those distributions that are not algebraically identifiable or not tree decomposable at all to discuss their impact on invariant-based inference.

%%%%%%%%%%%%%%%%%%%%%%%%%%%%%%%%%%%%
\section{\label{sect.tree}General properties of a Markov model on a tripod tree}

The starting point of our analysis is the tripod tree $\tree$ with taxa $\alpha,\beta,\gamma$, interior node $\wurzel$ and edges $(\wurzel,\alpha),\,(\wurzel,\beta),\,(\wurzel,\gamma)$ (see Fig. \ref{fig:tripod}). This is the only labeled topology for three taxa. Hence any inference will be process- and not topology-related. \citet{allman2003} select a taxon as the root for their approach. We will place the root at the interior node for the symmetry this provides in the tree equations. 

%$$$$$$$$$$$$$$$$$$$$$$$$$$$$$$$$$$$
\begin{figure}[htp]
\includegraphics[width=\textwidth]{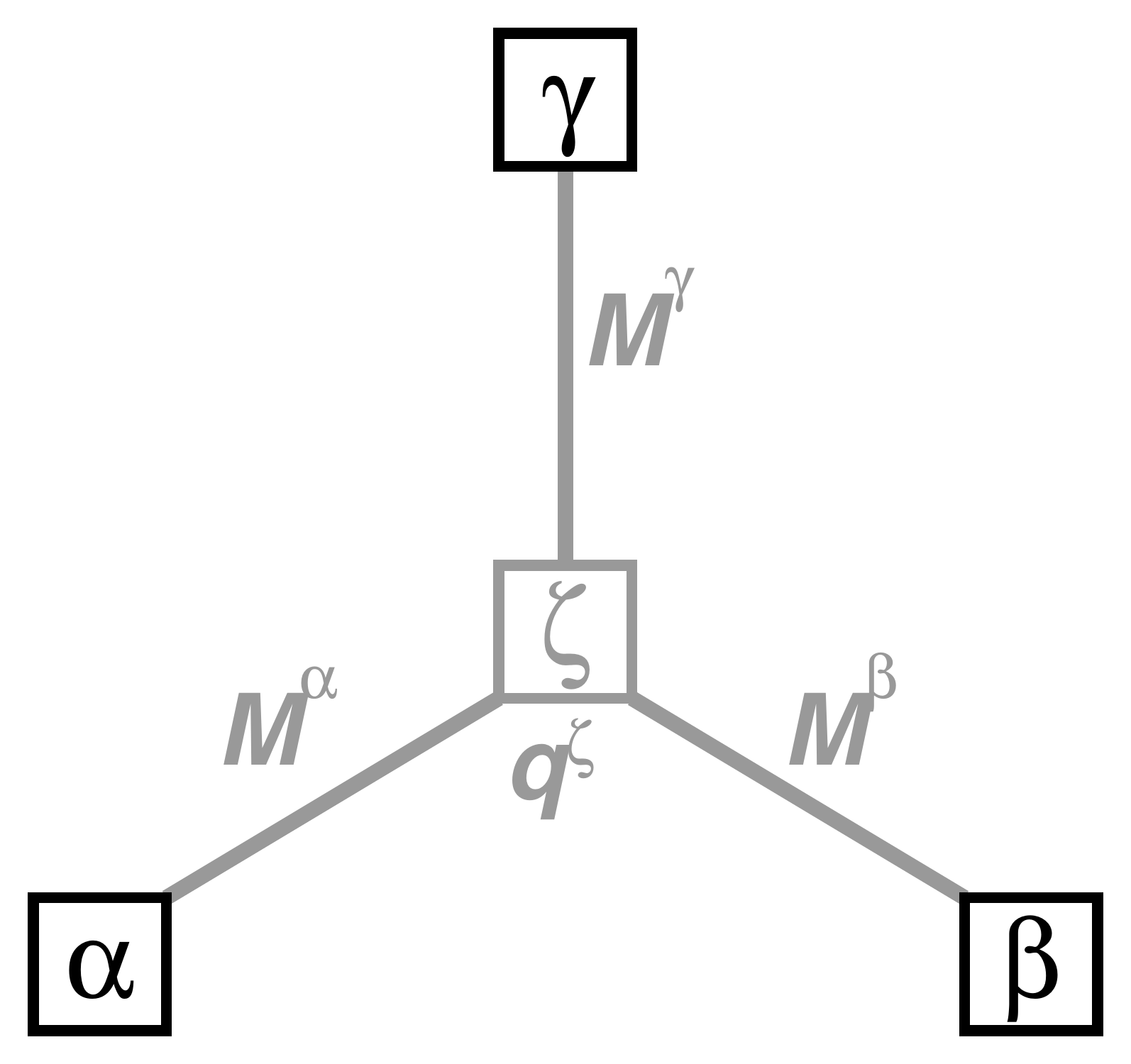}
\caption{\label{fig:tripod}The tripod tree $\tree$.}
\end{figure}
%$$$$$$$$$$$$$$$$$$$$$$$$$$$$$$$$$$$

As stated in the previous section, if the joint distribution $\dens$ of $X_\alpha,X_\beta,X_\gamma$ comes from a Markov process then there are parameters $\bm{q}^\wurzel,\,\bm{M}^\alpha,\,\bm{M}^\beta,\,\bm{M}^\gamma$ such that the Markov equations \eqref{eq:leaftotal} are satisfied. On a tripod tree these equations are the \emph{tripod equations}
%-----------------------------------
\begin{linenomath}
\begin{equation}\label{eq:tripodtotal}
p_{abc}=q^\wurzel_1M^\alpha_{1a}M^\beta_{1b}M^\gamma_{1c}+(1-q^\wurzel_1)M^\alpha_{0a}M^\beta_{0b}M^\gamma_{0c},\quad a,b,c\in\{0,1\}.
\end{equation}
\end{linenomath}
%-----------------------------------
As before we call $\dens$ \emph{tree decomposable}, if there are parameters, \emph{algebraically identifiable}, when the parameters are unique (up to some symmetries discussed later), and \emph{stochastically identifiable} if the parameters are unique and proper marginal and transition probabilities.

The works of \citet{lazarsfeld1968} and \citet{pearl1986} were mainly interested in inferring conditions under which a triplet distribution is stochastically identifiable. While recovering their results we also investigate tree decomposability and algebraic identifiability in order to describe their impact on invariant-based inference.

For three taxa the only invariant is the trivial invariant. Thus, one could expect that all triplet distributions are tree decomposable. As we will see later, this is not the case. In fact, we will present polynomials whose roots satisfy the trivial invariant but are not tree decomposable.

%###################################
\subsection{Statistics for binary models}\label{sect:statbin}

Following \citet{pearl1986}  we identify the symbols $0$ and $1$ with their actual integer counterparts. This permits us to introduce a set of terms that are very helpful for later steps of the analysis. We start by introducing the following abbreviations:
%-----------------------------------
\begin{linenomath}
\begin{align*}
\era_{\alpha\beta\gamma}&:=\erw X_\alpha X_\beta X_\gamma =\Pr[X_\alpha=1,X_\beta=1,X_\gamma=1]=p_{111},\\
\era_{\alpha\beta}&:=\erw X_\alpha X_\beta=\Pr[X_\alpha=1,X_\beta=1]=p_{11\str}=p_{110}+p_{111},\\
\era_\alpha&:=\erw X_\alpha=\Pr[X_\alpha=1]=p_{1\str\str}=p_{100}+p_{101}+p_{110}+p_{111}.
\end{align*}
\end{linenomath}
%-----------------------------------
The symbols $p_{11\str}$ and its modifications $p_{1\str1}$ etc. are direct consequences of the application of the law of total probability to the equation system \eqref{eq:tripodtotal}. These terms are also known as \emph{marginalisations} leading to a removal of a random variable from consideration by summing over its states. This linear modification means we can study the tripod equations \eqref{eq:tripodtotal} also in terms of its marginalisations.

In the case of the binary model the above symbols $\era_A$ for all $A\in\leaf$ correspond to the joint mean of the random variables for the taxa in $A$. Using these definitions we can introduce simple terms which correspond to the covariances between the set of random variables:
%-----------------------------------
\begin{linenomath}
\begin{align*}
\coa_{\alpha\beta}&:=\cov[X_\alpha,X_\beta]=\erw X_\alpha X_\beta-\erw X_\alpha\erw X_\beta,
\end{align*}
\end{linenomath}
%-----------------------------------
with equivalent definitions for $\coa_{\alpha\gamma}$ and $\coa_{\beta\gamma}$. Of further interest are the following terms ($c\in\{0,1\}$)
%-----------------------------------
\begin{linenomath}
\begin{align*}
\coa_{\alpha\beta|c}&:=p_{11c}p_{\str\str c}-p_{1\str c}p_{\str1c},
\end{align*}
\end{linenomath}
%-----------------------------------
with equivalent definitions for $\coa_{\alpha\gamma|b},\,b\in\{0,1\}$ and $\coa_{\beta\gamma|a},\,a\in\{0,1\}$. These terms are actually multiples of the conditional covariances, $\mathrm{Cov}[X_\alpha,X_\beta|X_\gamma=c]=\coa_{\alpha\beta|c}/p_{\str\str c}$. \change{We will use this linear relationship to talk about conditional covariance when we are regarding terms like $\coa_{\alpha\beta|c}$.} Finally, we also introduce the three-way covariances 
%-----------------------------------
\begin{linenomath}
\begin{align*}
\coa_{\alpha\beta\gamma}&:=\cov[X_\alpha,X_\beta,X_\gamma]=\erw(X_\alpha-\erw X_\alpha)(X_\beta-\erw X_\beta)(X_\gamma-\erw X_\gamma)\\
&=\era_{\alpha\beta\gamma}-\era_\alpha\era_{\beta\gamma}-\era_\beta\era_{\alpha\gamma}-\era_\gamma\era_{\alpha\beta}+2\era_\alpha\era_\beta\era_\gamma.
\end{align*}
\end{linenomath}
%-----------------------------------
For a review on covariance for more than two random variables see e.g. \citet{rayner2009}. The term $\coa_{\alpha\beta\gamma}$ describes the interactions of the three leaves considered. \citet{sumner2008} call this term a \emph{stangle}, a stochastic tangle, highlighting its relation to entangled states of qbits in quantum mechanics. The three-way covariances are zero in the case of symmetric models like CFN, which also reflects the findings in \citet{baake1998a}. However, for more complex models the three-way covariances are needed as indicated by the findings of \citet{chang1996}.

Since covariances are a measure of interdependence of random variables, and because the identification of a tree and a Markov model is an interpretation of the interdependence in terms of hidden variables and conditional independence, looking at these covariances is a very logical way to verify whether or not such an interpretation is admissible. Using these terms we can immediately propose a useful property.
%===================================
\begin{lemma}\label{lemma:flip}
Let $\dens$ denote the joint probability for binary random variables $X_\alpha,\,X_\beta$ and $X_\gamma$.  If we flip the state in one taxon, then we flip the signs in its pairwise covariances. E.g., if $X_\alpha\mapsto 1-X_\alpha$, then  $\coa_{\alpha\beta}\mapsto -\coa_{\alpha\beta}$, $\coa_{\alpha\gamma}\mapsto -\coa_{\alpha\gamma}$ $\coa_{\beta\gamma}\mapsto \coa_{\beta\gamma}$.
\end{lemma}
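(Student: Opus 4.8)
The plan is to compute directly how each covariance term transforms under the substitution $X_\alpha \mapsto 1 - X_\alpha$, using the bilinearity of covariance. First I would recall the elementary identity $\cov[aU+b, V] = a\,\cov[U,V]$ for constants $a,b$; applying it with $U = X_\alpha$, $a = -1$, $b = 1$ gives $\cov[1-X_\alpha, X_\beta] = -\cov[X_\alpha, X_\beta]$, which is exactly the claim $\coa_{\alpha\beta} \mapsto -\coa_{\alpha\beta}$. The same computation with $X_\gamma$ in place of $X_\beta$ yields $\coa_{\alpha\gamma} \mapsto -\coa_{\alpha\gamma}$. Finally, since the substitution only alters $X_\alpha$ and leaves $X_\beta, X_\gamma$ untouched, $\coa_{\beta\gamma} = \cov[X_\beta, X_\gamma]$ is manifestly unchanged. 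By the symmetry of the roles of $\alpha, \beta, \gamma$, the analogous statements for flipping $X_\beta$ or $X_\gamma$ follow verbatim.

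Alternatively, if one prefers to argue at the level of the distribution $\dens$ rather than invoking probabilistic covariance identities, I would note that flipping the state in taxon $\alpha$ amounts to the index permutation $p_{abc} \mapsto p_{(1-a)bc}$ on the eight entries of $\dens$. One then substitutes this into the explicit polynomial expression $\coa_{\alpha\beta} = \era_{\alpha\beta} - \era_\alpha \era_\beta = p_{11\str} - p_{1\str\str} p_{\str 1\str}$ and checks that the marginals transform as $\era_\alpha = p_{1\str\str} \mapsto p_{0\str\str} = 1 - \era_\alpha$, $\era_{\alpha\beta} = p_{11\str} \mapsto p_{01\str} = \era_\beta - \era_{\alpha\beta}$, while $\era_\beta, \era_\gamma, \era_{\beta\gamma}$ are fixed; a one-line cancellation then gives $\coa_{\alpha\beta} \mapsto (\era_\beta - \era_{\alpha\beta}) - (1-\era_\alpha)\era_\beta = -(\era_{\alpha\beta} - \era_\alpha\era_\beta) = -\coa_{\alpha\beta}$.

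There is essentially no obstacle here; the statement is a routine consequence of bilinearity, and the only thing to be careful about is bookkeeping — making sure that exactly those covariance terms involving the flipped taxon pick up a sign and that the remaining pairwise covariance is left alone. I would present the covariance-identity version as the main argument since it is the shortest and most transparent, and mention the distributional computation only as a remark. I would also briefly observe, for use later in the paper, that the same reasoning shows the conditional covariances $\coa_{\alpha\beta|c}$ and the stangle $\coa_{\alpha\beta\gamma}$ pick up corresponding sign changes, although the Lemma as stated only concerns the pairwise covariances.
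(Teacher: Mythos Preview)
Your proposal is correct. The paper's own proof is precisely your second (distributional) route: it writes the flip as $\widehat{p}_{abc}=p_{(1-a)bc}$ and performs the one-line cancellation on $p_{11\str}-p_{1\str\str}p_{\str1\str}$ that you sketch. Your preferred bilinearity argument $\cov[1-X_\alpha,X_\beta]=-\cov[X_\alpha,X_\beta]$ is equally valid and a touch cleaner, but the two are the same computation at different levels of abstraction rather than genuinely different approaches.
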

%===================================

One immediate consequence of this observation is that the product $\coa_{\alpha\beta}\coa_{\alpha\gamma}\coa_{\beta\gamma}$ always has the same sign no matter how much we flip states.

%###################################
\subsection{Tree properties}

In this section we assume that $\dens$ is tree decomposable and regard some immediate consequences. We will later see that these conditions are necessary for identifiability but not sufficient. Nevertheless, these conditions provide some immediate insights for it.

%===================================
\begin{lemma}\label{lem:indi}
\begin{enumerate}
\item If a triplet distribution $\dens$ is tree decomposable on $\tree$ with $\coa_{\alpha\beta}=0$, then also $\coa_{\alpha\beta\gamma}=0$ and $\coa_{\alpha\gamma}=0$ or $\coa_{\beta\gamma}=0$. 
\item If a triplet distribution $\dens$ is stochastically identifiable then the product $\coa_{\alpha\beta}\coa_{\alpha\gamma}\coa_{\beta\gamma}$ is non-negative.
\end{enumerate}
\end{lemma}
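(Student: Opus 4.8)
The plan is to work directly from the tripod equations \eqref{eq:tripodtotal}, exploiting the fact that a two-state tree-decomposable distribution at the interior root $\wurzel$ is a mixture of two product distributions. Write $w := q^\wurzel_1$, and let $a_i := M^\alpha_{i1}$, $b_i := M^\beta_{i1}$, $c_i := M^\gamma_{i1}$ for $i\in\{0,1\}$ be the probabilities of observing state $1$ at $\alpha,\beta,\gamma$ given root state $i$. Then $\era_\alpha = w a_1 + (1-w)a_0$, $\era_{\alpha\beta} = w a_1 b_1 + (1-w)a_0 b_0$, and similarly for the other marginals, while $\era_{\alpha\beta\gamma} = w a_1 b_1 c_1 + (1-w)a_0 b_0 c_0$. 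A short computation (the standard conditional-independence-at-the-root identity) gives the key factorisations
\begin{linenomath}
\begin{equation}\label{eq:covfact}
\coa_{\alpha\beta} = w(1-w)(a_1-a_0)(b_1-b_0),\qquad
\coa_{\alpha\beta\gamma} = w(1-w)(1-2w)(a_1-a_0)(b_1-b_0)(c_1-c_0),
\end{equation}
\end{linenomath}
with $\coa_{\alpha\gamma},\coa_{\beta\gamma}$ obtained by permuting the roles of $a,b,c$.

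For part (1): assume $\dens$ is tree decomposable and $\coa_{\alpha\beta}=0$. From the first identity in \eqref{eq:covfact}, either $w\in\{0,1\}$ — in which case $\dens$ is a single product distribution, so all three pairwise covariances and the stangle vanish — or $a_1=a_0$ or $b_1=b_0$. If $a_1=a_0$, then the factor $(a_1-a_0)$ appearing in $\coa_{\alpha\gamma}$ and in $\coa_{\alpha\beta\gamma}$ forces both to be zero, giving $\coa_{\alpha\gamma}=0$ and $\coa_{\alpha\beta\gamma}=0$; symmetrically, if $b_1=b_0$ then $\coa_{\beta\gamma}=0$ and $\coa_{\alpha\beta\gamma}=0$. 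In every case the stangle vanishes and at least one of $\coa_{\alpha\gamma},\coa_{\beta\gamma}$ vanishes, which is the claim.

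For part (2): if $\dens$ is stochastically identifiable then it is in particular tree decomposable, so \eqref{eq:covfact} applies with genuine probabilities $w\in[0,1]$ and $a_i,b_i,c_i\in[0,1]$. Multiplying the three pairwise factorisations,
\begin{linenomath}
\begin{equation}\label{eq:prodcov}
\coa_{\alpha\beta}\coa_{\alpha\gamma}\coa_{\beta\gamma} = \bigl(w(1-w)\bigr)^3 (a_1-a_0)^2(b_1-b_0)^2(c_1-c_0)^2 \ge 0,
\end{equation}
\end{linenomath}
since every factor is a square or a non-negative quantity. By Lemma~\ref{lemma:flip} this sign is moreover invariant under state-flips, so the statement is unambiguous.

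The only delicate point is establishing \eqref{eq:covfact} cleanly, i.e. verifying that the three-way covariance really collapses to the stated single product with the extra factor $(1-2w)$; this is the one genuine (though routine) computation, best done by substituting the mixture expressions for $\era_\alpha,\era_{\alpha\beta},\era_{\alpha\beta\gamma}$ into the definition of $\coa_{\alpha\beta\gamma}$ and simplifying. Note that part (2) only needs the \emph{pairwise} factorisations, which follow immediately from writing $X_\alpha,X_\beta$ as conditionally independent given the root and using the law of total covariance; one could even skip the stangle computation entirely for part (2). I would also remark that part (1) in fact only uses tree decomposability with $w\notin\{0,1\}$ as the generic case and dispatches the boundary case separately, so no stochasticity is needed there.
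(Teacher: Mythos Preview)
Your proof is correct and follows essentially the same approach as the paper: both rely on the factorisations of the pairwise and three-way covariances in terms of the root weight and the transition differences (what the paper records as the Markov invariants \eqref{eq:treepaircov}, \eqref{eq:treetriplecov}, \eqref{eq:product}), then read off the conclusions directly from the factor structure. The paper's version is terser only because it cites those identities from Section~\ref{sect:markovinv}, whereas you derive them inline.
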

%===================================

The non-negativity of the product has already been verified by \citet{lazarsfeld1968}. With Lemma \ref{lemma:flip} it is not complicated to derive  that on a star tree (with arbitrary number of leaves) there always is a state flipping such that all pairs of leaves are positively correlated.

%===================================
\begin{coro}\label{coro:positiveflipping}
Suppose we are given a  stochastically identifiable distribution $\dens$ on a  tree with finite  leaf set $L$ such that  the pairwise covariances do not vanish, i.e., $\coa_{\alpha\beta}\ne0$ for all $\alpha,\beta\in L$. Then there exists a set of leaves $L_0\subset L$ such that flipping the states of the leaves in $L_0$  yields all covariances  $\coa_{\alpha\beta}$, $\alpha,\beta\in L$, being positive.
\end{coro}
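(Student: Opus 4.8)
The plan is to recast the claim as a balancing statement for the signed complete graph on the leaf set $L$ and then to solve it by the standard ``switching'' argument. Since $\coa_{\alpha\beta}\neq0$ for every pair, the signs $s_{\alpha\beta}:=\sgn(\coa_{\alpha\beta})\in\{+1,-1\}$ are well defined. By Lemma~\ref{lemma:flip}, flipping the states of exactly the leaves in a subset $L_0\subseteq L$ and writing $\sigma_\alpha=-1$ for $\alpha\in L_0$ and $\sigma_\alpha=+1$ otherwise, turns each $\coa_{\alpha\beta}$ into $\sigma_\alpha\sigma_\beta\coa_{\alpha\beta}$. Hence the corollary is equivalent to finding signs $\sigma_\alpha\in\{+1,-1\}$, $\alpha\in L$, with $s_{\alpha\beta}=\sigma_\alpha\sigma_\beta$ for all $\alpha\neq\beta$ (then flipping $L_0=\{\alpha:\sigma_\alpha=-1\}$ makes every covariance equal to $|\coa_{\alpha\beta}|>0$).

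The next step is to produce the triangle condition that makes such a $\sigma$ exist. For any three leaves $\alpha,\beta,\gamma\in L$ the restriction of $\dens$ to $\{\alpha,\beta,\gamma\}$ is stochastically identifiable: it is tree decomposable because marginalising the tree model to three leaves collapses the induced subtree to a tripod whose edge matrices are products of the original stochastic matrices, and the hypothesis $\coa_{\alpha\beta},\coa_{\alpha\gamma},\coa_{\beta\gamma}\neq0$ keeps this tripod out of the degenerate cases, so uniqueness of the parameters is inherited. Then Lemma~\ref{lem:indi}(2) applies and gives $\coa_{\alpha\beta}\coa_{\alpha\gamma}\coa_{\beta\gamma}\geq0$; as the three factors are nonzero, $s_{\alpha\beta}s_{\alpha\gamma}s_{\beta\gamma}=+1$ for every triple.

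I would then close the argument with the usual switching construction. Fix any leaf $\alpha_0\in L$ and set $L_0:=\{\alpha\in L\setminus\{\alpha_0\}:\ \coa_{\alpha_0\alpha}<0\}$, so that $\sigma_{\alpha_0}=+1$ and $\sigma_\alpha=s_{\alpha_0\alpha}$ for $\alpha\neq\alpha_0$. For a pair $\alpha,\beta$ with $\alpha,\beta\neq\alpha_0$, the triangle identity on $\{\alpha_0,\alpha,\beta\}$ gives $s_{\alpha\beta}=s_{\alpha_0\alpha}s_{\alpha_0\beta}=\sigma_\alpha\sigma_\beta$, while for $\beta=\alpha_0$ one has $s_{\alpha\alpha_0}=\sigma_\alpha=\sigma_\alpha\sigma_{\alpha_0}$ trivially. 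Thus after flipping the leaves in $L_0$ every covariance becomes $\sigma_\alpha\sigma_\beta\coa_{\alpha\beta}=s_{\alpha\beta}\coa_{\alpha\beta}=|\coa_{\alpha\beta}|>0$, as required.

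The one delicate point is the claim in the second paragraph that the three-leaf restrictions are genuinely stochastically identifiable, so that Lemma~\ref{lem:indi}(2) truly applies rather than only some weaker ``tree decomposable with stochastic parameters'' variant; this should be supplied from the analysis of Section~\ref{sect.tripod}. If one wishes to bypass that dependence, the same balancing can be obtained directly from the two-state structure: by the Markov property on the tree, conditioning on $Y_v$ for a vertex $v$ on the path between $\alpha$ and $\beta$ gives $\coa_{\alpha\beta}=\mathrm{Var}(Y_v)\prod_{e\in P(\alpha,\beta)}\det\bm{M}^e$, where $P(\alpha,\beta)$ is the edge set of that path (take $v=\alpha$, noting $\mathrm{Var}(X_\alpha)>0$ since $\coa_{\alpha\beta}\neq0$). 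Hence $s_{\alpha\beta}$ is the product of the edge signs $\sgn(\det\bm{M}^e)$ along $P(\alpha,\beta)$, and choosing $\sigma_\alpha$ to be the product of these edge signs along the path from a fixed reference leaf to $\alpha$ realises $s_{\alpha\beta}=\sigma_\alpha\sigma_\beta$, because the symmetric difference of two root-to-leaf paths in a tree is exactly $P(\alpha,\beta)$.
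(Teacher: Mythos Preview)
Your proposal is correct and follows essentially the same route as the paper: fix a reference leaf $\alpha_0$, set $L_0=\{\alpha:\coa_{\alpha_0\alpha}<0\}$, and then use the nonnegativity of the triple product $\coa_{\alpha\beta}\coa_{\alpha\gamma}\coa_{\beta\gamma}$ from Lemma~\ref{lem:indi}(2) on each triple to conclude that the remaining covariances are positive after flipping. Your signed-graph/balancing language is a clean repackaging of exactly this argument; the paper carries it out directly without that vocabulary.

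Two small remarks. First, your worry about whether the three-leaf marginal is \emph{stochastically identifiable} is shared by the paper, which simply asserts that the restriction ``must obey the tripod equations'' and then invokes Lemma~\ref{lem:indi}(2). In fact the proof of Lemma~\ref{lem:indi}(2) only uses the factorisation \eqref{eq:product} with $q^\wurzel_1\in[0,1]$, so tree decomposability with stochastic parameters (which marginalisation of a stochastic tree model certainly gives) already suffices; uniqueness is not needed. Second, your alternative closing argument via $\coa_{\alpha\beta}=\mathrm{Var}(Y_v)\prod_{e\in P(\alpha,\beta)}\det\bm{M}^e$ is a genuinely independent and slightly more structural route: it bypasses Lemma~\ref{lem:indi}(2) entirely and reads the balancing $s_{\alpha\beta}=\sigma_\alpha\sigma_\beta$ straight off the path-product form, at the cost of invoking the edge-determinant identity (cf.\ \eqref{eq:treepaircov} and the LogDet discussion in Section~\ref{sect:markovinv}). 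The paper does not take this detour.
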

%===================================

Lemma \ref{lem:indi}(1) occurs exactly if $X_\alpha$ or $X_\beta$ is independent of the remaining random variables. It also implies the following:

%===================================
\begin{coro}\label{coro:notreepair}
A triplet distribution $\dens$ with $\coa_{\alpha\beta}=0$ but $\coa_{\alpha\gamma}\ne0$ and $\coa_{\beta\gamma}\ne0$ is not tree decomposable.
\end{coro}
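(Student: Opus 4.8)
The statement to prove is Corollary~\ref{coro:notreepair}: a triplet distribution $\dens$ with $\coa_{\alpha\beta}=0$ but $\coa_{\alpha\gamma}\ne0$ and $\coa_{\beta\gamma}\ne0$ is not tree decomposable.

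The plan is to argue by contraposition, directly invoking Lemma~\ref{lem:indi}(1). Suppose, for contradiction, that $\dens$ \emph{is} tree decomposable on the tripod tree $\tree$ and satisfies $\coa_{\alpha\beta}=0$. Then Lemma~\ref{lem:indi}(1) applies verbatim: it forces $\coa_{\alpha\beta\gamma}=0$ and, crucially, it forces $\coa_{\alpha\gamma}=0$ \emph{or} $\coa_{\beta\gamma}=0$. But this directly contradicts the hypothesis that both $\coa_{\alpha\gamma}\ne0$ and $\coa_{\beta\gamma}\ne0$. Hence no such tree decomposition can exist, and $\dens$ is not tree decomposable.

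In short, the corollary is an immediate logical consequence of Lemma~\ref{lem:indi}(1), so there is essentially nothing to compute. The only point requiring a word of care is that Lemma~\ref{lem:indi}(1) is stated for a fixed labelling of the three taxa as $\alpha,\beta,\gamma$; since the tripod tree has full symmetry in its three leaves (all three are attached to the single interior node $\wurzel$, and the tripod equations \eqref{eq:tripodtotal} are symmetric under permuting $\alpha,\beta,\gamma$ together with the corresponding permutation of $\bm{M}^\alpha,\bm{M}^\beta,\bm{M}^\gamma$), the conclusion of the lemma holds under any relabelling, which is all we need here.

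I expect no genuine obstacle. If one wanted to avoid quoting Lemma~\ref{lem:indi}(1) as a black box, the underlying mechanism is the standard tripod identity $\coa_{\alpha\beta}\,\coa_{\gamma\gamma\text{-type term}}$ relation: from \eqref{eq:tripodtotal} one computes $\coa_{\alpha\beta}=q^\wurzel_1(1-q^\wurzel_1)\,(M^\alpha_{11}-M^\alpha_{01})(M^\beta_{11}-M^\beta_{01})$, and similarly $\coa_{\alpha\gamma}$ and $\coa_{\beta\gamma}$ each factor through the ``edge contrasts'' $d_\alpha:=M^\alpha_{11}-M^\alpha_{01}$, $d_\beta$, $d_\gamma$ and the root variance $v:=q^\wurzel_1(1-q^\wurzel_1)$, giving $\coa_{\alpha\beta}=v\,d_\alpha d_\beta$, $\coa_{\alpha\gamma}=v\,d_\alpha d_\gamma$, $\coa_{\beta\gamma}=v\,d_\beta d_\gamma$. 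Then $\coa_{\alpha\beta}=0$ forces $v=0$ or $d_\alpha=0$ or $d_\beta=0$; in every case at least one of $\coa_{\alpha\gamma}$, $\coa_{\beta\gamma}$ vanishes, contradicting the hypothesis. But since Lemma~\ref{lem:indi}(1) is already available, the one-line contraposition above suffices.
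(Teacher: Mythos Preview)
Your proof is correct and matches the paper's approach: the paper also invokes Lemma~\ref{lem:indi}(1) to conclude that a tree-decomposable distribution with one vanishing pairwise covariance must have a second one vanish, then notes via the factorisation \eqref{eq:treepaircov} (exactly your $\coa_{\alpha\beta}=v\,d_\alpha d_\beta$ etc.) that this excludes even real- or complex-valued parameter sets. Your supplementary remark spelling out the edge-contrast factorisation is precisely the content the paper alludes to, so the two arguments coincide.
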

%===================================

Thus we already see, that the trivial invariant does not characterise tree decomposable distributions in this setting. The following example shows that such cases can be easily constructed.

%$$$$$$$$$$$$$$$$$$$$$$$$$$$$$$$$$$$
\begin{example}\label{ex:notreepair}
Triplet distributions of type
%-----------------------------------
\begin{linenomath}
\begin{align*}
\dens&=(p_{000},p_{001},p_{010},p_{011},p_{100},p_{101},p_{110},p_{111})\\
&=(4-x,x,2,2,2,2,2,2)/16,\quad x\in[0,4]\setminus\{2\},
\end{align*}
\end{linenomath}
%-----------------------------------
yield $\coa_{\alpha\beta}=0$ but $\coa_{\alpha\gamma}=\coa_{\beta\gamma}=(2-x)/32$ and hence are not tree decomposable. In fact, for  binary variables  a much more  complicated graphical model with more ``inner'' nodes and edges is needed  to explain  theses covariances.
\end{example}
%$$$$$$$$$$$$$$$$$$$$$$$$$$$$$$$$$$$

%%%%%%%%%%%%%%%%%%%%%%%%%%%%%%%%%%%%
\section{\label{sect.tripod}Solving the tripod equations}

In this section we are given a triplet distribution $\dens$ and infer conditions under which it is algebraically identifiable. For each case we will present an example. 

%###################################
\subsection{The algebraic solution}

As has been pointed out multiple times, the only invariant in the tripod case is the trivial invariant. In other words, the ``set'' of invariants for a tripod tree is satisfied by all triplet distributions. However, as we have seen in Corollary \ref{coro:notreepair} there are triplet distributions that are not tree decomposable even though they satisfy the trivial invariant. Thus executing the actual decomposition, i.e. finding a solution for the tripod equations not only provides complete forms for the parameters but is also helpful to identify further cases. The first task is to clarify up to which level of uniqueness the decomposition of a triplet distribution can be attained. To do this we look at the implications of a state-flip at the root.

%===================================
\begin{lemma}\label{lem:stateflip}
If a triplet distribution $\dens$ is tree decomposable with parameters $\bm{q}^\wurzel,\,\bm{M}^\alpha,\,\bm{M}^\beta,\,\bm{M}^\gamma$ then it is also tree decomposable for parameters $\bm{\widetilde q}^\wurzel,\,\bm{\widetilde M}^\alpha,\,\bm{\widetilde M}^\beta,\,\bm{\widetilde M}^\gamma$ with ${\widetilde q}^\wurzel_z=q^\wurzel_{1-z},\,{\widetilde M}^\alpha_{za}=M^\alpha_{(1-z)a},\,{\widetilde M}^\beta_{zb}=M^\alpha_{(1-z)b},\,{\widetilde M}^\gamma_{zc}=M^\gamma_{(1-z)c}$.
\end{lemma}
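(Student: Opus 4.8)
The plan is simply to substitute the transformed parameters into the tripod equations \eqref{eq:tripodtotal} and check that the same $p_{abc}$ results. Concretely, I would start from the right-hand side of \eqref{eq:tripodtotal} with the tilde parameters:
\begin{linenomath}
\begin{equation*}
\widetilde q^\wurzel_1\widetilde M^\alpha_{1a}\widetilde M^\beta_{1b}\widetilde M^\gamma_{1c}+(1-\widetilde q^\wurzel_1)\widetilde M^\alpha_{0a}\widetilde M^\beta_{0b}\widetilde M^\gamma_{0c}.
\end{equation*}
\end{linenomath}
By definition $\widetilde q^\wurzel_1=q^\wurzel_0=1-q^\wurzel_1$ and $\widetilde M^\alpha_{1a}=M^\alpha_{0a}$, $\widetilde M^\alpha_{0a}=M^\alpha_{1a}$, and likewise for $\beta$ and $\gamma$. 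Substituting these identities term by term, the first summand becomes $(1-q^\wurzel_1)M^\alpha_{0a}M^\beta_{0b}M^\gamma_{0c}$ and the second becomes $q^\wurzel_1 M^\alpha_{1a}M^\beta_{1b}M^\gamma_{1c}$; their sum is exactly the original right-hand side of \eqref{eq:tripodtotal}, which equals $p_{abc}$. Hence the tilde parameters also satisfy the tripod equations, so $\dens$ is tree decomposable with respect to them.

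There is essentially no obstacle here: the statement is a bookkeeping observation that the two rows of each Markov matrix (together with the two root probabilities) can be simultaneously relabelled without changing the induced leaf distribution, because the index $z$ at the root is summed out in \eqref{eq:tripodtotal}. The only point worth a word of care is that the relabelling must be applied consistently to the root distribution and to all three transition matrices at once — flipping the root label alone, or only some of the matrices, would not preserve $\dens$. I would also remark (as the paper evidently intends, since it speaks of ``symmetries discussed later'') that if $\bm q^\wurzel$, $\bm M^\alpha$, $\bm M^\beta$, $\bm M^\gamma$ are genuine probabilities then so are the tilde parameters, so the same symmetry is present for stochastically identifiable distributions; this is immediate since each tilde entry is just a reindexed original entry. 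Thus the best one can hope for in the identifiability results of Section \ref{sect.tripod} is uniqueness up to this root-state-flip symmetry.
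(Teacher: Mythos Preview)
Your proof is correct and follows exactly the same approach as the paper: both simply substitute the flipped parameters into the tripod equations \eqref{eq:tripodtotal} and observe that the two summands are merely interchanged, recovering $p_{abc}$. Your additional remarks on the necessity of flipping all parameters simultaneously and on preservation of stochasticity are accurate and go slightly beyond what the paper writes out.
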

%===================================

Hence, except for the case where everything is equal to $1/2$, there will always be at least two sets of parameters that decompose a triplet distribution $\dens$. In terms of molecular evolution one can view these solutions as having either few mutations ($M^\delta_{z(1-z)}<M^\delta_{zz},\,\delta$ leaf) or many mutations ($M^\delta_{z(1-z)}>M^\delta_{zz},\,\delta$ leaf) for the other. \citet{chang1996} addressed the problem of symmetric solutions by introducing matrix categories that are \emph{reconstructible from rows}. One such class consists of diagonally dominant matrices, i.e. $M^\delta_{zz}>M^\delta_{z(1-z)}$ for all leaves and $z\in\{0,1\}$. If only these two sets of parameters exist then we will regard the associated distribution as algebraically identifiable. It should be noted that the set of symmetric solutions increases with the number of parameters, i.e. each possible permutation of the states at the root yields a new solution. This fact has also been observed by \citet{chang1996} in the case of the time-continuous Markov model.

Next, we present conditions under which $\dens$ is algebraically identifiable and present the closed form for the parameters.

%===================================
\begin{theorem}\label{thm:alg}
Let $\dens$ denote a triplet distribution and assume
%-----------------------------------
\begin{linenomath}
\begin{equation}\label{eq:alg}
\coa_{\alpha\beta}\coa_{\alpha\gamma}\coa_{\beta\gamma}\ne0,\quad
\coa_{\alpha\beta}\coa_{\alpha\gamma}\coa_{\beta\gamma}\ne-
\bigg(\frac{\coa_{\alpha\beta\gamma}}{2}\bigg)^2.
\end{equation}
\end{linenomath}
%-----------------------------------
Then $\dens$ is algebraically identifiable. The associated parameters have the following form:
%-----------------------------------
\begin{linenomath}
\begin{equation}\label{eq:sol}
\begin{split}
q^\wurzel_1&=\frac{1}{2}-\frac{\coa_{\alpha\beta\gamma}}{2\sqrt{\chi}},\\
M^\alpha_{01}&=\era_\alpha+\frac{\coa_{\alpha\beta\gamma}-\sqrt{\chi}}{2\coa_{\beta\gamma}},\quad
M^\beta_{01}=\era_\beta+\frac{\coa_{\alpha\beta\gamma}-\sqrt{\chi}}{2\coa_{\alpha\gamma}},\quad
M^\gamma_{01}=\era_\gamma+\frac{\coa_{\alpha\beta\gamma}-\sqrt{\chi}}{2\coa_{\alpha\beta}},\\
M^\alpha_{11}&=\era_\alpha+\frac{\coa_{\alpha\beta\gamma}+\sqrt{\chi}}{2\coa_{\beta\gamma}},\quad
M^\beta_{11}=\era_\beta+\frac{\coa_{\alpha\beta\gamma}+\sqrt{\chi}}{2\coa_{\alpha\gamma}},\quad
M^\gamma_{11}=\era_\gamma+\frac{\coa_{\alpha\beta\gamma}+\sqrt{\chi}}{2\coa_{\alpha\beta}},
\end{split}
\end{equation}
\end{linenomath}
%-----------------------------------
where $\chi=\coa_{\alpha\beta\gamma}^2+4\coa_{\alpha\beta}\coa_{\alpha\gamma}\coa_{\beta\gamma}$.
\end{theorem}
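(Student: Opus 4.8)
\noindent\emph{Sketch of proof.}
The plan is to reparametrise the transition matrices so that the tripod equations \eqref{eq:tripodtotal} become statements about the low-order moments of $\dens$, to collapse the system to four polynomial equations in four unknowns, and then to solve those in closed form and count solutions.

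First I would exploit row-stochasticity to cut down the parameters. Write $q=q^\wurzel_1$ and, for a leaf $\delta\in\{\alpha,\beta,\gamma\}$, set $u_\delta:=M^\delta_{11}-M^\delta_{01}$ and $m_\delta:=qM^\delta_{11}+(1-q)M^\delta_{01}$; then $\bm M^\delta$ is recovered from $(m_\delta,u_\delta)$ through $M^\delta_{11}=m_\delta+(1-q)u_\delta$, $M^\delta_{01}=m_\delta-qu_\delta$ and the remaining entries by the row sums. Computing the marginal, pairwise and triple moments from \eqref{eq:tripodtotal}, and using that the three leaves are conditionally independent given the root state, a routine expansion shows that the model reproduces the first moments iff $m_\delta=\era_\delta$, and that it then produces
\begin{linenomath}
\begin{gather*}
\coa_{\alpha\beta}=q(1-q)u_\alpha u_\beta,\qquad \coa_{\alpha\gamma}=q(1-q)u_\alpha u_\gamma,\qquad \coa_{\beta\gamma}=q(1-q)u_\beta u_\gamma,\\
\coa_{\alpha\beta\gamma}=q(1-q)(1-2q)u_\alpha u_\beta u_\gamma.
\end{gather*}
\end{linenomath}
Since, given the trivial invariant \eqref{eq:trivialinv}, the probabilities $p_{abc}$ are in invertible (triangular polynomial) correspondence with the seven quantities $\era_\alpha,\era_\beta,\era_\gamma,\coa_{\alpha\beta},\coa_{\alpha\gamma},\coa_{\beta\gamma},\coa_{\alpha\beta\gamma}$, matching $\dens$ is the same as matching these seven; the three first-moment equations are solved for free by $m_\delta=\era_\delta$, so $\dens$ is tree decomposable precisely when the four displayed equations admit a solution in $(q,u_\alpha,u_\beta,u_\gamma)$, and algebraic identifiability of $\dens$ reduces to counting these solutions modulo the symmetry of Lemma \ref{lem:stateflip}.

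The core step is to solve the four equations under the hypotheses \eqref{eq:alg}. The first condition forces $s:=q(1-q)\neq0$ and every $u_\delta\neq0$. Multiplying the three pairwise equations gives $\coa_{\alpha\beta}\coa_{\alpha\gamma}\coa_{\beta\gamma}=s^3(u_\alpha u_\beta u_\gamma)^2$; squaring the stangle equation and substituting $(1-2q)^2=1-4s$ and $(u_\alpha u_\beta u_\gamma)^2=\coa_{\alpha\beta}\coa_{\alpha\gamma}\coa_{\beta\gamma}/s^3$ collapses, after clearing $s$, to $s\,\chi=\coa_{\alpha\beta}\coa_{\alpha\gamma}\coa_{\beta\gamma}$ with $\chi=\coa_{\alpha\beta\gamma}^2+4\coa_{\alpha\beta}\coa_{\alpha\gamma}\coa_{\beta\gamma}$; equivalently $\chi=(q(1-q)u_\alpha u_\beta u_\gamma)^2$, which is what makes $\sqrt{\chi}$ meaningful. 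The second condition in \eqref{eq:alg} says precisely $\chi\neq0$, so $s=\coa_{\alpha\beta}\coa_{\alpha\gamma}\coa_{\beta\gamma}/\chi$ is determined, $1-4s=\coa_{\alpha\beta\gamma}^2/\chi$, and the quadratic $q(1-q)=s$ yields $q=\tfrac12\pm\tfrac{\coa_{\alpha\beta\gamma}}{2\sqrt{\chi}}$. Dividing suitable products of pairwise equations gives $s\,u_\alpha^2=\coa_{\alpha\beta}\coa_{\alpha\gamma}/\coa_{\beta\gamma}$, hence $u_\alpha^2=\chi/\coa_{\beta\gamma}^2$ and cyclically; the pairwise equations force the three signs of the $u_\delta$ to coincide, and the stangle equation ties this common sign to the root chosen for $q$. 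Exactly two solutions survive; they are interchanged by $q\mapsto1-q$, $u_\delta\mapsto-u_\delta$, which is exactly the reparametrisation of the root state-flip of Lemma \ref{lem:stateflip}; and substituting back through $M^\delta_{11}=\era_\delta+(1-q)u_\delta$, $M^\delta_{01}=\era_\delta-qu_\delta$ reproduces \eqref{eq:sol}. Because every step is an equivalence under \eqref{eq:alg}, the system is genuinely solvable, so $\dens$ is tree decomposable; together with the two-element solution set modulo Lemma \ref{lem:stateflip} this gives algebraic identifiability.

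I expect the sign bookkeeping in the last step to be the main obstacle: one has to verify that the branch of $\sqrt{\chi}$, the common sign of the $u_\delta$, and the root chosen for $q$ are coupled so tightly that no spurious solution is introduced and the two genuine solutions are exactly the Lemma \ref{lem:stateflip} pair. The only place where the argument branches is $\coa_{\alpha\beta\gamma}=0$, where $q=\tfrac12$ is forced and the stangle equation becomes vacuous; there the two sign choices for the $u_\delta$ again give precisely the state-flip pair, so the conclusion is unchanged.
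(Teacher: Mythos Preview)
Your argument is correct and takes a genuinely different route from the paper's proof in Appendix~\ref{app:proofs}. The paper proceeds by brute-force elimination: it rewrites the tripod equations in their marginalised form \eqref{eq:eabc}--\eqref{eq:egamma}, solves the one-leaf equations for $(1-q^\wurzel_1)M^\delta_{01}$, substitutes into the two-leaf equations, and eventually isolates a quadratic in $M^\alpha_{11}$ (with leading coefficient $\coa_{\beta\gamma}$) which it solves and then back-substitutes to recover $q^\wurzel_1$ and the remaining entries. Your reparametrisation $m_\delta=\era_\delta$, $u_\delta=M^\delta_{11}-M^\delta_{01}$ instead lands you directly on the Markov-invariant identities \eqref{eq:treepaircov}, \eqref{eq:treetriplecov}, \eqref{eq:product}, \eqref{eq:chimarkov} of Section~\ref{sect:markovinv}, and you then solve those symmetrically. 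The paper explicitly flags this possibility---after displaying \eqref{eq:treetriplecov}--\eqref{eq:chimarkov} it remarks that ``these equivalences permit a different way to prove Theorem~\ref{thm:alg}''---so you have essentially written out the alternative proof the authors allude to but do not give.

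What each approach buys: the paper's elimination is self-contained and needs no structural insight, but it breaks the symmetry between the leaves and is longer. Your route is manifestly symmetric, makes the solution count transparent (one quadratic in $q$, one global sign on the $u_\delta$), and ties the result directly to the Markov-invariant viewpoint. One small imprecision: ``the pairwise equations force the three signs of the $u_\delta$ to coincide'' is not literally right in the general algebraic (possibly complex) setting; what is true is that each $u_\delta^2$ is determined and the pairwise relations $\coa_{\alpha\beta}=s\,u_\alpha u_\beta$ fix $u_\beta,u_\gamma$ once $u_\alpha$ is chosen, leaving exactly two choices related by the global flip $u_\delta\mapsto-u_\delta$. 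With that wording the sign bookkeeping you flag goes through cleanly, including the $\coa_{\alpha\beta\gamma}=0$ branch.
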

%===================================

Note that \citet{pearl1986} presented a similar solution for the parameters. Looking at the parameters in \eqref{eq:sol} we see that algebraically the conditions in \eqref{eq:alg} prevent division by zero. Together with the trivial invariant we can thus claim that the space of algebraically identifiable triplet distributions is given by $\mathcal{S}\setminus(\mathcal{S}_0\cup\mathcal{S}_1)$ with
%-----------------------------------
\begin{linenomath}
\begin{align*}
\mathcal{S}&:=\{\dens\in\mathbb{R}^8_+:\,p_{000}+\dots+p_{111}=1\},\\
\mathcal{S}_0&:=\{\dens\in\mathcal{S}:\,\coa_{\alpha\beta}\coa_{\alpha\gamma}\coa_{\beta\gamma}=0\},\\
\mathcal{S}_1&:=\{\dens\in\mathcal{S}:\,\coa_{\alpha\beta\gamma}^2+4\coa_{\alpha\beta}\coa_{\alpha\gamma}\coa_{\beta\gamma}=0\}.
\end{align*}
\end{linenomath}
%-----------------------------------

Considering \eqref{eq:alg} and Lemma \ref{lem:indi}(2) we see that triplet distributions with $\coa_{\alpha\beta}\coa_{\alpha\gamma}\coa_{\beta\gamma}<0$ are only algebraically, but not stochastically identifiable. In fact, for $-\coa_{\alpha\beta\gamma}^2<4\coa_{\alpha\beta}\coa_{\alpha\gamma}\coa_{\beta\gamma}<0$ we get real-valued parameters \change{that are not all probabilities}, and for $4\coa_{\alpha\beta}\coa_{\alpha\gamma}\coa_{\beta\gamma}<-\coa_{\alpha\beta\gamma}^2$ we get a set of complex-valued parameters.

The following example presents such distributions.

%$$$$$$$$$$$$$$$$$$$$$$$$$$$$$$$$$$$
\begin{example}\label{ex:negative}
Regard the distributions
%-----------------------------------
\begin{linenomath}
\begin{align*}
\dens_1&=(6,7,2,1,1,1,4,5)/27,\quad
\dens_2=(6,7,1,2,1,1,4,5)/27,\quad
\dens_3=(6,6,2,2,1,1,4,5)/27.
\end{align*}
\end{linenomath}
%-----------------------------------
All three distributions satisfy the conditions \eqref{eq:alg}, i.e. they are algebraically identifiable. For $\dens_1$ the covariance $\coa_{\beta\gamma}$ is negative and the other two positive, while for $\dens_2$ we have $\coa_{\alpha\gamma}$ negative and the other two positive. The distribution $\dens_3$ has only positive pairwise covariances. 

The parameters for $\dens_1$ are real-valued, the parameters for $\dens_2$ are complex-valued and $\dens_3$ is stochastically identifiable.

Though this example is artificial it indicates just how sensitive the model is to misreads in alignments. E.g., the difference between $\dens_1$ and $\dens_3$ could be seen as reading the pattern $011$ under $\dens_3$ as pattern $001$ under $\dens_1$.
\end{example}
%$$$$$$$$$$$$$$$$$$$$$$$$$$$$$$$$$$$

%###################################
\subsection{Stochastically identifiable distributions}

The next step is to determine conditions under which a distribution satisfying \eqref{eq:alg} is stochastically identifiable. These conditions should correspond to the conditions given by \citet[Theorem 1]{pearl1986}.

Example \ref{ex:negative} dealt with $\coa_{\alpha\beta}\coa_{\alpha\gamma}\coa_{\beta\gamma}<0$. However, as the following example shows, positivity of the product does not necessarily yield stochastic identifiability.

%$$$$$$$$$$$$$$$$$$$$$$$$$$$$$$$$$$$
\begin{example}\label{exa:pos}
The tripod distribution
%-----------------------------------
\begin{linenomath}
\begin{equation*}
\dens=(68,0,20,12,20,12,17,51)/200
\end{equation*}
\end{linenomath}
%-----------------------------------
yields positive covariances for all three pairs but also $M^\gamma_{01}=-1/20$, i.e. not a probability.
\end{example}
%$$$$$$$$$$$$$$$$$$$$$$$$$$$$$$$$$$$

The example contains a pattern of expected zero occurrence. From the tripod equations we conclude that a stochastically identifiable distribution is strictly positive, thus this example is slightly contrived. However, as Example \ref{ex:notreepair} showed, a strictly positive triplet distribution is not necessarily stochastically identifiable either.

In order to get necessary and sufficient conditions on a triplet distribution to be stochastically identifiable we need to go back to the parameters in \eqref{eq:sol} and bound them accordingly. This yields:

%===================================
\begin{theorem}\label{thm:stoch}
A triplet distribution $\dens$ is stochastically identifiable if and only if after suitable state flips the following inequalities hold
%-----------------------------------
\begin{linenomath}
\begin{equation}\label{eq:probs}
\begin{split}
\coa_{\alpha\beta}>0,&\quad\coa_{\alpha\beta|0}\ge0,\quad\coa_{\alpha\beta|1}\ge0,\\
\coa_{\alpha\gamma}>0,&\quad\coa_{\alpha\gamma|0}\ge0,\quad\coa_{\alpha\gamma|1}\ge0,\\
\coa_{\beta\gamma}>0,&\quad\coa_{\beta\gamma|0}\ge0,\quad\coa_{\beta\gamma|1}\ge0.
\end{split}
\end{equation}
\end{linenomath}
%-----------------------------------
\end{theorem}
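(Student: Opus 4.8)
The plan is to work directly from the closed-form parameters in \eqref{eq:sol} provided by Theorem \ref{thm:alg}, translating the requirement ``all entries of $\bm{q}^\wurzel,\bm{M}^\alpha,\bm{M}^\beta,\bm{M}^\gamma$ lie in $[0,1]$'' into inequalities on $\dens$, and then showing these are equivalent (after suitable state flips) to the nine inequalities in \eqref{eq:probs}. Throughout I may invoke Lemma \ref{lem:indi}(2), which forces $\coa_{\alpha\beta}\coa_{\alpha\gamma}\coa_{\beta\gamma}\ge0$ for a stochastically identifiable $\dens$; combined with Corollary \ref{coro:positiveflipping} and Lemma \ref{lemma:flip}, this lets me reduce to the case where all three pairwise covariances are \emph{positive}, which is the normalisation implicit in ``after suitable state flips''. (The boundary cases with a vanishing pairwise covariance are handled separately via Lemma \ref{lem:indi}(1): then the distribution decomposes as a product and stochastic identifiability is checked by hand — or, more precisely, such a $\dens$ is not algebraically identifiable in the sense of Theorem \ref{thm:alg} and must be argued on its own.)

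First I would record that once $\coa_{\alpha\beta},\coa_{\alpha\gamma},\coa_{\beta\gamma}>0$ we have $\chi=\coa_{\alpha\beta\gamma}^2+4\coa_{\alpha\beta}\coa_{\alpha\gamma}\coa_{\beta\gamma}>0$, so $\sqrt{\chi}>|\coa_{\alpha\beta\gamma}|\ge 0$; hence from \eqref{eq:sol} we get $0<q^\wurzel_1<1$ automatically, and for each leaf $\delta\in\{\alpha,\beta,\gamma\}$ the two numbers $M^\delta_{01}$ and $M^\delta_{11}$ satisfy $M^\delta_{01}<\era_\delta<M^\delta_{11}$ because the added fractions have numerators $\coa_{\alpha\beta\gamma}-\sqrt\chi<0$ and $\coa_{\alpha\beta\gamma}+\sqrt\chi>0$ over a positive denominator. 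So the $\bm{M}$-entries automatically straddle the marginal $\era_\delta\in(0,1)$; what remains is to force $M^\delta_{01}\ge 0$ and $M^\delta_{11}\le 1$ (the other two bounds $M^\delta_{01}\le 1$, $M^\delta_{11}\ge 0$ then follow). The key computational step is to show that, after clearing the square root, the inequality $M^\gamma_{01}\ge 0$ — i.e. $\era_\gamma+\dfrac{\coa_{\alpha\beta\gamma}-\sqrt\chi}{2\coa_{\alpha\beta}}\ge 0$, equivalently $2\era_\gamma\coa_{\alpha\beta}+\coa_{\alpha\beta\gamma}\ge\sqrt\chi$ — is equivalent to the pair of conditions $2\era_\gamma\coa_{\alpha\beta}+\coa_{\alpha\beta\gamma}\ge 0$ together with $(2\era_\gamma\coa_{\alpha\beta}+\coa_{\alpha\beta\gamma})^2\ge\chi$, and that this last polynomial inequality simplifies \emph{exactly} to $\coa_{\alpha\beta|0}\ge 0$. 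Symmetrically, $M^\gamma_{11}\le 1$ will reduce to $\coa_{\alpha\beta|1}\ge 0$, and cycling $\alpha\to\beta\to\gamma$ gives the remaining six inequalities. One needs the identity $\coa_{\alpha\beta|0}+\coa_{\alpha\beta|1}=\coa_{\alpha\beta}-$ (a nonnegative cross term) or, more usefully, direct expansions of $\coa_{\alpha\beta|0},\coa_{\alpha\beta|1}$ in terms of the $p_{\cdot\cdot\cdot}$, which I would verify by substituting the tree-decomposition \eqref{eq:tripodtotal}; under that substitution $\coa_{\alpha\beta|c}$ factors through the column $c$ of $\bm{M}^\gamma$ and the covariance structure at the root, making the sign bookkeeping transparent.

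For the converse direction, I would start from a $\dens$ satisfying \eqref{eq:probs} after flips, note \eqref{eq:alg} holds (positivity of the triple product rules out both degenerate cases, using strictness $\coa_{\alpha\beta},\coa_{\alpha\gamma},\coa_{\beta\gamma}>0$), apply Theorem \ref{thm:alg} to get the parameters in closed form, and then run the above equivalences backwards to conclude every parameter lies in $[0,1]$; uniqueness up to the state-flip symmetry of Lemma \ref{lem:stateflip} is already part of algebraic identifiability, so stochastic identifiability follows. I expect the main obstacle to be the ``suitable state flips'' clause: I must check that the nine-inequality system in \eqref{eq:probs} is preserved (as a set of conditions, after re-flipping) under the group of leaf-flips, i.e. that if some flip makes the pairwise covariances positive then the \emph{same} flip is the one under which the conditional-covariance inequalities must be tested, and that flips of the root (Lemma \ref{lem:stateflip}) do not interfere. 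Tracking how $\coa_{\alpha\beta|c}$ transforms under a flip of $X_\gamma$ (it swaps $\coa_{\alpha\beta|0}\leftrightarrow\coa_{\alpha\beta|1}$) versus a flip of $X_\alpha$ (it changes sign and permutes the conditioning value) is the delicate part; once that transformation table is in place the theorem follows by combining it with Lemma \ref{lemma:flip} and the computations above. A secondary subtlety is the non-strict versus strict inequalities: the $\ge$ in the conditional covariances corresponds to allowing a transition probability to hit $0$ or $1$ exactly, which is permitted for a \emph{stochastic} matrix, whereas $\coa_{\alpha\beta}>0$ must stay strict because $\coa_{\alpha\beta}=0$ lands in $\mathcal S_0$ and is excluded from algebraic identifiability — I would make this boundary discussion explicit.
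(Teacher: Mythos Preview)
Your approach is essentially the paper's: bound the closed-form parameters \eqref{eq:sol} in $[0,1]$, clear denominators, square, and simplify to the conditional-covariance inequalities. One small slip: the pairing is reversed --- $M^\gamma_{01}\ge 0$ reduces (after squaring) to $\coa_{\alpha\beta|1}\ge 0$, not $\coa_{\alpha\beta|0}\ge 0$, and $M^\gamma_{11}\le 1$ gives $\coa_{\alpha\beta|0}\ge 0$; since both appear in \eqref{eq:probs} this does not affect the argument, but correct it when you write things out.
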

%===================================

In other words, the direction of the correlation between a pair of leaves shall not be influenced by the third leaf. \change{Note that \citet{pearl1986}, Theorem 1 presents analogous conditions on $\dens$.} With this we can summarise that a triplet distribution is stochastically identifiable if it is in $\mathcal{S}\setminus(\mathcal{S}_0\cup\mathcal{S}_1)$ and there is a state flip such that \eqref{eq:probs} is satisfied.

%$$$$$$$$$$$$$$$$$$$$$$$$$$$$$$$$$$$
\begin{example}
The tripod distribution $\dens$ from Example \ref{exa:pos} has positive pairwise  and conditional covariances except for $\coa_{\alpha\beta|1}=-9/2500$. Thus it does not satisfy \eqref{eq:probs}.
\end{example}
%$$$$$$$$$$$$$$$$$$$$$$$$$$$$$$$$$$$

%###################################
\subsection{Non-identifiable cases}

The above considerations dealt with cases where a given triplet distribution $\dens$ is algebraically identifiable. The final step of the tripod analysis is to regard those distributions that violate the conditions \eqref{eq:alg}. Corollary \ref{coro:notreepair} already discussed the case where one pairwise covariance is zero while the other two are not and we found that they were not tree decomposable. In the following we look at the remaining cases.

%===================================
\begin{prop}\label{prop:nofullpair}
Assume that a triplet distribution $\dens$ obeys $\coa_{\alpha\beta}\coa_{\alpha\gamma}\coa_{\beta\gamma}=-(\coa_{\alpha\beta\gamma}/2)^2$ but $\coa_{\alpha\beta}\coa_{\alpha\gamma}\coa_{\beta\gamma}\ne0$. Then $\dens$  is not tree decomposable.
\end{prop}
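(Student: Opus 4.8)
The plan is to argue by contradiction. Suppose $\dens$ is tree decomposable on $\tree$, so that \eqref{eq:tripodtotal} holds for some parameters $q^\wurzel_1,\bm M^\alpha,\bm M^\beta,\bm M^\gamma$ (not necessarily probabilities). I would first record how the covariances of such a $\dens$ factor through the root, which is exactly the computation underlying the closed forms in \eqref{eq:sol}. Write $s:=q^\wurzel_1$ and $u_\delta:=M^\delta_{11}-M^\delta_{01}$ for $\delta\in\{\alpha,\beta,\gamma\}$. Conditioning on the root state $Y_\wurzel$ and using that the leaves are independent given $Y_\wurzel$ gives $\erw[(X_\delta-\era_\delta)\mid Y_\wurzel]=(Y_\wurzel-s)\,u_\delta$, whence
\[
\coa_{\delta\delta'}=s(1-s)\,u_\delta u_{\delta'},\qquad
\coa_{\alpha\beta\gamma}=s(1-s)(1-2s)\,u_\alpha u_\beta u_\gamma ,
\]
since $\erw[(Y_\wurzel-s)^2]=s(1-s)$ and $\erw[(Y_\wurzel-s)^3]=s(1-s)(1-2s)$.

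Next I would substitute these expressions into $\chi=\coa_{\alpha\beta\gamma}^2+4\,\coa_{\alpha\beta}\coa_{\alpha\gamma}\coa_{\beta\gamma}$. With $P:=u_\alpha u_\beta u_\gamma$ one obtains $\coa_{\alpha\beta\gamma}^2=\big(s(1-s)\big)^2(1-2s)^2P^2$ and $4\,\coa_{\alpha\beta}\coa_{\alpha\gamma}\coa_{\beta\gamma}=4\big(s(1-s)\big)^3P^2$, and the elementary identity $(1-2s)^2+4s(1-s)=1$ collapses the sum to the perfect square
\[
\chi=\big(s(1-s)\,u_\alpha u_\beta u_\gamma\big)^2 .
\]
This is a purely algebraic identity in the parameters, so it holds regardless of whether they are probabilities (indeed over $\mathbb{C}$ as well); in particular, for a tree-decomposable $\dens$ one has $\chi=0$ if and only if $s(1-s)\,u_\alpha u_\beta u_\gamma=0$.

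Finally I would close the argument: the hypothesis $\coa_{\alpha\beta}\coa_{\alpha\gamma}\coa_{\beta\gamma}=-(\coa_{\alpha\beta\gamma}/2)^2$ is precisely $\chi=0$, so $s(1-s)\,u_\alpha u_\beta u_\gamma=0$; but then $\coa_{\alpha\beta}\coa_{\alpha\gamma}\coa_{\beta\gamma}=\big(s(1-s)\big)^3P^2=0$, contradicting the standing assumption $\coa_{\alpha\beta}\coa_{\alpha\gamma}\coa_{\beta\gamma}\neq0$. Hence no decomposing parameters exist and $\dens$ is not tree decomposable. I do not anticipate a genuine obstacle: the only points needing care are the factorization of the covariances through the root and the identity $(1-2s)^2+4s(1-s)=1$ that exhibits $\chi$ as a square; conceptually the proposition just records that the hypersurface $\{\chi=0\}$ meets the decomposable locus only along strata where some $\coa_{\delta\delta'}$ already vanishes (the case of a single vanishing pairwise covariance having been treated in Corollary \ref{coro:notreepair}).
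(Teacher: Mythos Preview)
Your proposal is correct and is essentially the paper's own argument: the paper invokes the identities \eqref{eq:treepaircov}, \eqref{eq:treetriplecov} and \eqref{eq:chimarkov} (your factorizations of $\coa_{\delta\delta'}$, $\coa_{\alpha\beta\gamma}$ and $\chi$) to conclude that $\chi=0$ forces the product of pairwise covariances to vanish, exactly as you do. You simply make the derivation of $\chi=\big(s(1-s)u_\alpha u_\beta u_\gamma\big)^2$ explicit via the identity $(1-2s)^2+4s(1-s)=1$, which the paper records as equation \eqref{eq:chimarkov} without spelling out that step.
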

%===================================

In other words, we found another set of triplet distributions that are not tree decomposable.

%$$$$$$$$$$$$$$$$$$$$$$$$$$$$$$$$$$$
\begin{example}
The distribution
%-----------------------------------
\begin{linenomath}
\begin{equation*}
\dens=(16,5,8,15,14,5,2,15)/80
\end{equation*}
\end{linenomath}
%-----------------------------------
yields $\coa_{\alpha\beta}=-1/80,\,\coa_{\alpha\gamma}=1/40$ and $\coa_{\beta\gamma}=1/8$ but $\chi=0$ and hence has no factorisation in the sense of \eqref{eq:tripodtotal}. As in Example \ref{ex:notreepair} we point out here that there seems to be no simple graphical structure which explains the observed covariances adequately. On the other hand, similarly to Example \ref{ex:negative} the simple act of moving $1/80$ from pattern $100$ to pattern $110$ yields algebraic identifiability. This indicates the level of care required when inferring meaning from observed covariances.
\end{example}
%$$$$$$$$$$$$$$$$$$$$$$$$$$$$$$$$$$$

Together with Corollary \ref{coro:notreepair} this covers the distributions that are not tree decomposable. The remaining cases are triplet distributions that are tree decomposable but not algebraically identifiable.

%===================================
\begin{prop}\label{prop:degenerates}
Let $\dens$ be a triplet distribution with $\coa_{\alpha\beta}=0$ and $\coa_{\alpha\gamma}=0$. Then $\dens$ is tree decomposable with infinitely many parameter sets.

The parameter sets are identified by one of the following compositions:
\begin{enumerate}
\item[(i)] $\coa_{\beta\gamma}\ne0$. Then $M^\alpha_{0a}=M^\alpha_{1a}=p_{a\str\str},\,a\in\{0,1\}$, and for any $u,b,c\in\{0,1\}$:
%-----------------------------------
\begin{linenomath}
\begin{equation}
q^\wurzel_1=\frac{p_{\str\str c}-M^\gamma_{1c}}{M^\gamma_{0c}-M^\gamma_{1c}},\quad
M^\beta_{ub}=\frac{p_{\str bc}-p_{\str b\str}M^\gamma_{(1-u)c}}{p_{\str\str c}-M^\gamma_{(1-u)c}}
\end{equation}
\end{linenomath}
%-----------------------------------
with free parameters $M^\gamma_{0c}\ne M^\gamma_{1c}$.
\item[(ii)] $\coa_{\beta\gamma}=0$. Then for all $a,b,c,\in\{0,1\}$ the free parameters can be distributed as follows:
\begin{enumerate}
\item[(a)] $M^\alpha_{0a}=M^\alpha_{1a}=p_{a\str\str},\,M^\beta_{0b}=M^\beta_{1b}=p_{\str b\str}$ and
%-----------------------------------
\begin{linenomath}
\begin{equation}
q^\wurzel_1=\frac{p_{\str\str c}-M^\gamma_{0c}}{M^\gamma_{1c}-M^\gamma_{0c}},
\end{equation}
\end{linenomath}
%-----------------------------------
with free parameters $M^\gamma_{0z}\ne M^\gamma_{1z}$.
\item[(b)] $M^\alpha_{0a}=M^\alpha_{1a}=p_{a\str\str},\,M^\beta_{0b}=M^\beta_{1b}=p_{\str b\str},\,M^\gamma_{0c}=M^\gamma_{1c}=p_{\str\str c}$ with free parameter $q^\wurzel_1$.
\item[(c)] $q^\wurzel_1=0,\,M^\alpha_{0a}=p_{a\str\str},\,M^\beta_{0b}=p_{\str b\str},\,M^\gamma_{0c}=p_{\str\str c}$ with free parameters $M^\alpha_{1a},\,M^\beta_{1b},\,M^\gamma_{1c}$.
\end{enumerate}
\end{enumerate} 
\end{prop}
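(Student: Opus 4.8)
The plan is to exploit the structure forced by the two vanishing pairwise covariances. First I would invoke Lemma~\ref{lem:indi}(1): since $\dens$ is assumed to be (eventually shown) tree decomposable and $\coa_{\alpha\beta}=0$, we get $\coa_{\alpha\beta\gamma}=0$, and either $\coa_{\alpha\gamma}=0$ or $\coa_{\beta\gamma}=0$; here we are given $\coa_{\alpha\gamma}=0$ as well. The key algebraic observation is that $\coa_{\alpha\beta}=\coa_{\alpha\gamma}=0$ means $X_\alpha$ is uncorrelated with both $X_\beta$ and $X_\gamma$, which in the two-state setting forces the factorisation $p_{a\ast\ast}\cdot p_{\ast bc}$ to capture the joint law whenever $\alpha$ is decoupled. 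Concretely, I would verify directly from \eqref{eq:tripodtotal} that setting $M^\alpha_{0a}=M^\alpha_{1a}=p_{a\ast\ast}$ makes the $\alpha$-coordinate factor out: the tripod equation collapses to $p_{abc}=p_{a\ast\ast}\big(q^\wurzel_1 M^\beta_{1b}M^\gamma_{1c}+(1-q^\wurzel_1)M^\beta_{0b}M^\gamma_{0c}\big)$, so the problem reduces to decomposing the \emph{bivariate} distribution $p_{\ast bc}$ on the two leaves $\beta,\gamma$ through a single hidden binary root.

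Next I would treat the reduced two-leaf problem, splitting on whether $\coa_{\beta\gamma}\neq0$ (case (i)) or $\coa_{\beta\gamma}=0$ (case (ii)). In case (i), the bivariate distribution $p_{\ast bc}$ has a genuine correlation, and the classical fact is that a correlated pair of binary variables admits a one-dimensional hidden-variable (mixture-of-two-product-measures) representation with one degree of freedom: once one picks the column $M^\gamma_{0c}\neq M^\gamma_{1c}$ freely (equivalently, picks how the hidden state splits the $\gamma$-marginal), the remaining quantities $q^\wurzel_1$ and $M^\beta_{ub}$ are forced, and solving the resulting $2\times2$ linear system gives exactly the stated formulas. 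I would check the formulas by back-substitution: plug the displayed $q^\wurzel_1$ and $M^\beta_{ub}$ into $q^\wurzel_1 M^\beta_{1b}M^\gamma_{1c}+(1-q^\wurzel_1)M^\beta_{0b}M^\gamma_{0c}$ and confirm it equals $p_{\ast bc}$, using $\sum_c p_{\ast bc}=p_{\ast b\ast}$ and $\sum_b p_{\ast bc}=p_{\ast\ast c}$. In case (ii), $X_\beta$ is also decoupled, so by the same factoring argument $M^\beta$ can be frozen at $p_{\ast b\ast}$ as well, leaving only the $\gamma$-leaf attached to the root; then either the root still does something nontrivial on $\gamma$ (subcase (a), with $M^\gamma$ free and $q^\wurzel_1$ forced by matching $p_{\ast\ast c}$), or $\gamma$ is frozen too (subcase (b), with $q^\wurzel_1$ entirely free since it has disappeared from every equation), or the mixture weight is pinned to an extreme point $q^\wurzel_1=0$, which kills the entire first summand and lets $M^\delta_{1\cdot}$ for all three leaves be arbitrary (subcase (c)); the symmetric option $q^\wurzel_1=1$ is covered by the root state-flip of Lemma~\ref{lem:stateflip}. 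I would also note that these are genuinely infinitely many solutions because each case carries a free continuous parameter.

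The main obstacle I anticipate is not any single computation but the \emph{completeness} bookkeeping: showing that the list (i),(ii)(a)--(c) is exhaustive, i.e.\ that every tree-decomposing parameter set falls into one of these templates up to the root state-flip symmetry. This requires arguing that $\coa_{\alpha\beta}=\coa_{\alpha\gamma}=0$ forces, in \emph{any} decomposition, either $M^\alpha_{0a}=M^\alpha_{1a}$ for all $a$ or a degenerate root ($q^\wurzel_1\in\{0,1\}$) — and similarly for $\beta$ in case (ii). The cleanest route is: from the decomposition, $\coa_{\alpha\beta}= q^\wurzel_1(1-q^\wurzel_1)(M^\alpha_{11}-M^\alpha_{01})(M^\beta_{11}-M^\beta_{01})$ (a standard identity one can read off \eqref{eq:tripodtotal}, and analogously for the other pairs and for $\coa_{\alpha\beta\gamma}$), so $\coa_{\alpha\beta}=\coa_{\alpha\gamma}=0$ forces $q^\wurzel_1\in\{0,1\}$, or $M^\alpha_{11}=M^\alpha_{01}$, or ($M^\beta_{11}=M^\beta_{01}$ and $M^\gamma_{11}=M^\gamma_{01}$). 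Walking through these alternatives, peeling off the $q^\wurzel_1\in\{0,1\}$ cases via Lemma~\ref{lem:stateflip}, and matching each surviving branch to (i) or (ii)(a)--(c) is the part that needs care; the rest is routine substitution into the tripod equations \eqref{eq:tripodtotal}.
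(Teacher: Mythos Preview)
Your plan matches the paper's proof, which is a single sentence: use \eqref{eq:treepaircov} to read off which degeneracies the vanishing covariances force on the parameters, then back-substitute the listed parameter families into \eqref{eq:tripodtotal}. You go well beyond the paper in treating completeness; the paper does not argue that the list exhausts all decompositions, and your case analysis via the product identity \eqref{eq:treepaircov} is exactly the right tool for that.

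The circularity you flag in your first sentence is not cosmetic, however---it exposes a real defect in the proposition's hypotheses rather than in your strategy. Every family listed in (i) and (ii)(a)--(c) produces $p_{abc}=p_{a\str\str}\cdot g(b,c)$ for some $g$, so the listed parameters can reproduce $\dens$ only when $X_\alpha$ is \emph{fully} independent of $(X_\beta,X_\gamma)$, i.e.\ only when additionally $\coa_{\alpha\beta\gamma}=0$. But $\coa_{\alpha\beta}=\coa_{\alpha\gamma}=0$ alone does not force this: with $X_\beta,X_\gamma$ independent fair coins and $X_\alpha=X_\beta\oplus X_\gamma$, all three pairwise covariances vanish while $\coa_{\alpha\beta\gamma}=-1/8$, and by \eqref{eq:treetriplecov} (equivalently Lemma~\ref{lem:indi}(1)) this $\dens$ admits no tripod decomposition at all. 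The paper's one-line proof does not catch this, and the remark immediately after the proposition (that $X_\alpha$ is ``even completely independent'') suggests the authors tacitly assume the missing condition. Your argument goes through cleanly once $\coa_{\alpha\beta\gamma}=0$ is added to the hypotheses; without it the first sentence of the proposition is false, and your appeal to Lemma~\ref{lem:indi}(1) to obtain $\coa_{\alpha\beta\gamma}=0$ is indeed circular and cannot be repaired.
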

%===================================

In other words, the distribution is tree decomposable because process parameters exist but it is not algebraically identifiable because we have no means to recover the true parameters or more precisely, there are infinitely many  parameters that yield the same distribution.

%$$$$$$$$$$$$$$$$$$$$$$$$$$$$$$$$$$$
\begin{example}
The triplet distribution
%-----------------------------------
\begin{linenomath}
\begin{equation*}
\dens=(2,2,2,2,2,2,2,2)/16
\end{equation*}
\end{linenomath}
%-----------------------------------
yields complete independence of the leaves $\coa_{\alpha\beta}=\coa_{\alpha\gamma}=\coa_{\beta\gamma}=0$, i.e. the case (ii) in Proposition \ref{prop:degenerates} is to be regarded here. It is not too surprising that such a distribution yields an infinite number of solutions since the state at the root is completely undetermined.
\end{example}

Looking again at the cases listed above, we see that  $X_\alpha$ is not only pairwise independent from $(X_\beta,X_\gamma)$ (induced by $\tau_{\alpha\beta{}}=\tau_{\alpha\gamma}=0$), but even completely independent. Then the multiple solutions come from the fact that we can place the root arbitrarily between $\beta$ and $\gamma$.   
%$$$$$$$$$$$$$$$$$$$$$$$$$$$$$$$$$$$

The good news is, that the non-identifiable cases form a small subset among all triplet distributions. In fact:

%===================================
\begin{prop}\label{prop:lebesgue}
Non-identifiable triplet distributions, i.e. distributions violating the conditions \eqref{eq:alg} form a Lebesgue zero set in the set of all possible triplet distributions.
\end{prop}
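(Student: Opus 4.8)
The plan is to show that the set of non-identifiable triplet distributions is contained in the union $\mathcal{S}_0\cup\mathcal{S}_1$, and that each of these is a Lebesgue-null subset of the $7$-dimensional affine slice $\mathcal{S}$. Since by the discussion following Theorem \ref{thm:alg} the non-identifiable distributions are exactly those violating \eqref{eq:alg}, i.e. those lying in $\mathcal{S}_0\cup\mathcal{S}_1$, it suffices to treat these two sets separately.

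First I would handle $\mathcal{S}_0$. The set $\mathcal{S}$ is (the positive part of) a $7$-dimensional affine subspace of $\mathbb{R}^8$, on which Lebesgue measure is naturally defined via any affine parametrisation, say by $(p_{\bm{x}})_{\bm{x}\ne 000}$. Each of $\coa_{\alpha\beta},\coa_{\alpha\gamma},\coa_{\beta\gamma}$ is a polynomial (indeed quadratic) function of these coordinates, and none of them is the zero polynomial — for instance $\coa_{\alpha\beta}$ is non-zero at the stochastically identifiable distribution $\dens_3$ of Example \ref{ex:negative}, and similarly for the other two by symmetry. Hence each equation $\coa_{\alpha\beta}=0$, $\coa_{\alpha\gamma}=0$, $\coa_{\beta\gamma}=0$ defines a proper algebraic hypersurface in the $7$-dimensional parameter space, which is therefore Lebesgue-null (a standard fact: the zero set of a non-trivial polynomial has measure zero). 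A finite union of null sets is null, so $\mathcal{S}_0$ is null in $\mathcal{S}$.

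Next I would handle $\mathcal{S}_1=\{\dens\in\mathcal{S}:\ \coa_{\alpha\beta\gamma}^2+4\coa_{\alpha\beta}\coa_{\alpha\gamma}\coa_{\beta\gamma}=0\}$. Again $\chi=\coa_{\alpha\beta\gamma}^2+4\coa_{\alpha\beta}\coa_{\alpha\gamma}\coa_{\beta\gamma}$ is a polynomial in the coordinates, so I only need to exhibit one triplet distribution in $\mathcal{S}$ where $\chi\ne 0$; then $\mathcal{S}_1$ is the zero set of a non-trivial polynomial and hence null by the same argument. Such a witness is again readily available — e.g. the distribution $\dens_3$ of Example \ref{ex:negative}, which is stochastically identifiable and so by Theorem \ref{thm:alg} satisfies \eqref{eq:alg}, in particular $\chi\ne0$; alternatively, one may simply observe that for all-positive pairwise covariances one has $\chi\ge 4\coa_{\alpha\beta}\coa_{\alpha\gamma}\coa_{\beta\gamma}>0$. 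Combining, $\mathcal{S}_0\cup\mathcal{S}_1$ is a finite union of proper algebraic subsets of the $7$-dimensional space $\mathcal{S}$, hence Lebesgue-null, which is the assertion.

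The only point that needs any care — and the mild ``obstacle'' — is making precise what ``Lebesgue zero set in the set of all triplet distributions'' means, since $\mathcal{S}$ sits inside $\mathbb{R}^8$ but has empty interior there (it lies on the hyperplane $\sum p_{\bm{x}}=1$). The clean way is to fix once and for all the affine chart $\mathcal{S}\hookrightarrow\mathbb{R}^7$, $\dens\mapsto(p_{\bm{x}})_{\bm{x}\ne000}$, and use the pulled-back Lebesgue measure; the statement is then that $\mathcal{S}_0\cup\mathcal{S}_1$ has measure zero for this $7$-dimensional measure, and everything above goes through. All remaining verifications (that the relevant polynomials are genuinely non-zero, that zero sets of non-zero polynomials are null) are routine.
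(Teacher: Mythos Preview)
Your proof is correct and follows essentially the same approach as the paper: both argue that $\mathcal{S}_0$ and $\mathcal{S}_1$ are zero sets of non-constant polynomials and hence Lebesgue-null. Your version is in fact more careful than the paper's, which works directly in $\mathbb{R}^8$ without explicitly passing to the $7$-dimensional affine chart; your discussion of this point resolves an ambiguity the paper leaves implicit.
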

%===================================

This concludes our analysis of the tripod case. We identified the subset of triplet distributions that are uniquely algebraically and stochastically identifiable, and those that are tree decomposable but not algebraically identifiable, or not tree decomposable at all. 

%%%%%%%%%%%%%%%%%%%%%%%%%%%%%%%%%%%%
\section{\label{sect.quartet}Extension to quartet trees}

In this section we will explore the implications of extending the results for three taxa to four taxa. For this section we look at the quartet tree $Q=(\vertex,\edge)$ with
%-----------------------------------
\begin{linenomath}
\begin{equation*}
\vertex=\{\wurzel,\psi,\alpha,\beta,\gamma,\delta\},\quad
\edge=\{(\wurzel,\psi),(\wurzel,\alpha),(\wurzel,\beta),(\psi,\gamma),(\psi,\delta)\}.
\end{equation*}
\end{linenomath}
%-----------------------------------
\change{$Q$ is described by the observation that $\alpha,\beta$ are connected to $\wurzel$ and $\gamma,\delta$ are connected to $\psi$. There are two alternative topologies for the leaf set $\{\alpha,\beta,\gamma,\delta\}$, each of which arises by switching the position of $\beta$ with the position of either $\gamma$ or $\delta$ in the edge set $\edge$. The following results are thus valid for all topologies when taking the associated leaf switch into account.}

Fig. \ref{fig:quartet} provides an illustration including the four tripod restrictions $\overline{\tree}=\tree_{\alpha\beta\gamma},\,\widetilde{\tree}=\tree_{\alpha\beta\delta},\,\widehat{\tree}=\tree_{\alpha\gamma\delta}$ and $\check{\tree}=\tree_{\beta\gamma\delta}$.

Regard the quartet distribution  $\bm{\pi}=(\pi_{abcd})_{a,b,c,d\in\{0,1\}}$ describing the joint distribution for $\alpha,\,\beta,\,\gamma$ and $\delta$. If $\bm{\pi}$ is stochastically identifiable and reversible then it can be reconstructed from the \emph{marginalisations} on its four tripods \citep{chang1996}, i.e. computing the parameters for all tripods will immediately return the full process. However, the converse is not necessarily true. As Example \ref{exa:chor} below shows, there are cases where each tripod marginalisation is stochastically identifiable but no quartet tree can be reconstructed.

%$$$$$$$$$$$$$$$$$$$$$$$$$$$$$$$$$$$
\begin{figure}[htb]
\begin{center}
\includegraphics[width=\textwidth]{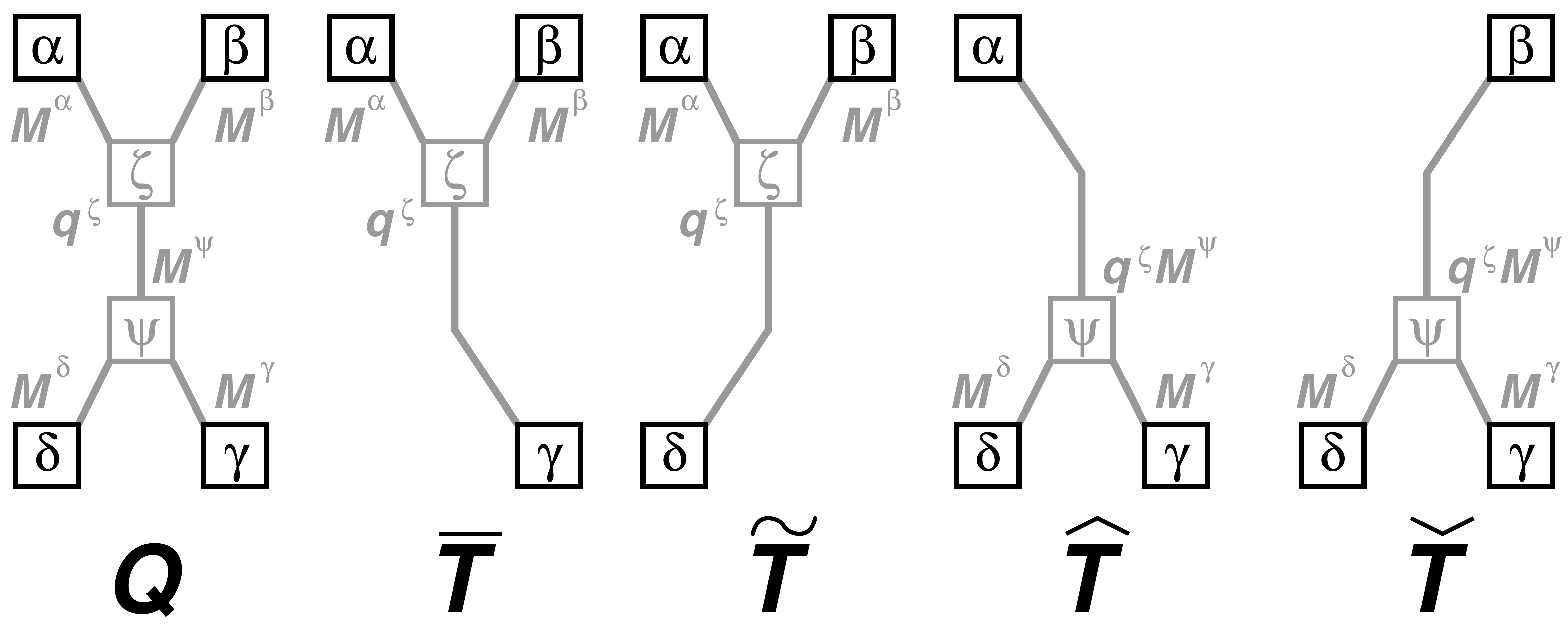}%
\caption{\label{fig:quartet}The quartet tree $Q$ with its tripod restrictions $\overline{\tree},\,\widetilde{\tree},\,\widehat{\tree}$ and $\check{\tree}$. Again, gray lines and vertices indicate the hidden or unknown variables of the approach presented here.}
\end{center}
\end{figure}
%$$$$$$$$$$$$$$$$$$$$$$$$$$$$$$$$$$$

\citet{pearl1986} presented an algorithm to reconstruct the topology for an arbitrary number of taxa. Their algorithm employs the condition that tripods that share an interior node in the (unknown) tree topology must have the same marginal distribution at this interior node. Their approach yields an invariant, which for $Q$ amounts to
%-----------------------------------
\begin{linenomath}
\begin{equation}
f_1(\bm{\pi})=\coa_{\alpha\delta}\coa_{\beta\gamma}-\coa_{\alpha\gamma}\coa_{\beta\delta}.
\end{equation}
\end{linenomath}
%-----------------------------------
This invariant is related to the four-point-condition \citep[e.g.,][p. 146]{semple2003} and thus topologically informative, i.e. it is particular to topology $Q$. If a distribution $\bm{\pi}$ is from another tree than $f_1(\bm{\pi})\ne0$.

To reconstruct the process parameters as well, more invariants are needed. In particular, for $\bm{\pi}$ to be algebraically identifiable on $Q$ the parameters obtained from the tripod marginalisations must satisfy the following properties:

%$$$$$$$$$$$$$$$$$$$$$$$$$$$$$$$$$$$
\begin{enumerate}
\item The parameters for edges $(\wurzel,\alpha),\,(\wurzel,\beta)$ and $q^\wurzel$ obtained from triplet distributions $\overline{\dens}$ and $\widetilde{\dens}$, respectively, must be equal.
\item The parameters for edges $(\psi,\gamma),\,(\psi,\delta)$ and $q^\psi$ obtained from triplet distributions $\widehat{\dens}$ and $\check{\dens}$, respectively, must be equal.
\item The parameters $\bm{M}^\psi$ for the interior edge $(\wurzel,\psi)$ are obtained from the equations
%-----------------------------------
\begin{linenomath}
\begin{equation}\label{eq:interior_pars}
\begin{split}
\overline{M}^\gamma_{01}&=(1-M^\psi_{01})\widehat{M}^\gamma_{01}+M^\psi_{01}\widehat{M}^\gamma_{11},\\
%\widetilde{M}^\delta_{01}&=(1-M^\psi_{01})\widehat{M}^\delta_{01}+M^\psi_{01}\widehat{M}^\delta_{11},\\
\overline{M}^\gamma_{11}&=(1-M^\psi_{11})\widehat{M}^\gamma_{01}+M^\psi_{11}\widehat{M}^\gamma_{11}.
%\widetilde{M}^\delta_{11}&=(1-M^\psi_{11})\widehat{M}^\delta_{01}+M^\psi_{11}\widehat{M}^\delta_{11}.
\end{split}
\end{equation}
\end{linenomath}
%-----------------------------------
These equations must hold equivalently when $\gamma$ is replaced by $\delta$ and the parameters come from tripod $\widetilde{\tree}$ instead of $\widehat{M}$.
\end{enumerate}
%$$$$$$$$$$$$$$$$$$$$$$$$$$$$$$$$$$$

These conditions imply further restrictions on $\bm{\pi}$. An indicator for the minimal number of such conditions is the observation that a quartet distribution $\bm{\pi}$ has 15 degrees of freedom, but there are only 11 process parameters on $Q$, two for each edge and one for the root distribution. Thus we need at least four additional conditions or rather invariants. We will use the above observations to derive an equivalent set of invariants.

%===================================
\begin{prop}\label{prop:algext}
A quartet distribution $\bm{\pi}$ is algebraically identifiable on $Q$ if its tripod marginalisations satisfy conditions \eqref{eq:alg} and the following invariants vanish on $\bm{\pi}$:
%-----------------------------------
\begin{linenomath}
\begin{align*}
f_0(\bm{\pi})&=\era_{\alpha\beta\gamma\delta}\coa_{\alpha\gamma}-\era_{\alpha\beta\gamma}\era_{\alpha\gamma\delta}+\era_\gamma\era_{\alpha\beta}\era_{\alpha\gamma\delta}+\era_\alpha\era_{\gamma\delta}\era_{\alpha\beta\gamma}-\era_{\alpha\beta}\era_{\alpha\gamma}\era_{\gamma\delta},\\
f_1(\bm{\pi})&=\coa_{\alpha\delta}\coa_{\beta\gamma}-\coa_{\alpha\gamma}\coa_{\beta\delta},\\
f_2(\bm{\pi})&=\coa_{\alpha\gamma}\coa_{\beta\gamma\delta}-\coa_{\beta\gamma}\coa_{\alpha\gamma\delta},\\
f_3(\bm{\pi})&=\coa_{\alpha\gamma}\coa_{\alpha\beta\delta}-\coa_{\alpha\delta}\coa_{\alpha\beta\gamma}.
\end{align*}
\end{linenomath}
%-----------------------------------
The parameters unique up to state flip at the interior nodes are then given by Theorem \ref{thm:alg} and
%-----------------------------------
\begin{linenomath}
\begin{equation}\label{eq:interior}
\begin{split}
M^\psi_{01}&=\frac{1}{2}+\frac{\coa_{\alpha\delta}\coa_{\alpha\beta\gamma}-\coa_{\alpha\beta}\coa_{\alpha\gamma\delta}-\coa_{\alpha\delta}\sqrt{\chi_{\alpha\beta\gamma}}}{2\coa_{\alpha\beta}\sqrt{\chi_{\alpha\gamma\delta}}},\\
M^\psi_{11}&=\frac{1}{2}+\frac{\coa_{\alpha\delta}\coa_{\alpha\beta\gamma}-\coa_{\alpha\beta}\coa_{\alpha\gamma\delta}+\coa_{\alpha\delta}\sqrt{\chi_{\alpha\beta\gamma}}}{2\coa_{\alpha\beta}\sqrt{\chi_{\alpha\gamma\delta}}}.
\end{split}
\end{equation}
\end{linenomath}
%-----------------------------------
\end{prop}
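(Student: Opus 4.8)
The plan is to build the quartet parameters directly from the tripod solutions of Theorem~\ref{thm:alg} applied to the four tripod marginalisations $\overline{\dens},\widetilde{\dens},\widehat{\dens},\check{\dens}$, and to translate the three gluing conditions stated just before the proposition into the polynomial identities $f_0=f_1=f_2=f_3=0$. First I would observe that, because each tripod marginalisation satisfies \eqref{eq:alg}, Theorem~\ref{thm:alg} gives unique (up to state flip) values for $q^\wurzel,\bm M^\alpha,\bm M^\beta$ from $\overline{\dens}$, for $q^\wurzel,\bm M^\alpha,\bm M^\beta$ again from $\widetilde{\dens}$, for $q^\psi,\bm M^\gamma,\bm M^\delta$ from $\widehat{\dens}$, and for $q^\psi,\bm M^\gamma,\bm M^\delta$ from $\check{\dens}$; the marginal leaf means $\era_\alpha,\dots$ are automatically consistent across tripods, so the only real content is matching the root distributions and the shared Markov matrices. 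Condition~1 (the $(\wurzel,\alpha),(\wurzel,\beta)$ parameters and $q^\wurzel$ agree across $\overline{\tree}$ and $\widetilde{\tree}$) should, after substituting the closed forms from \eqref{eq:sol} and clearing the $\sqrt{\chi}$ denominators, reduce to a single polynomial relation; I expect this to be exactly $f_1=0$ together with one more relation, and symmetric manipulation using $\coa_{\alpha\beta}=\cov[X_\alpha,X_\beta]$ computed from $\bm\pi$ should identify it as $f_3=0$. Likewise Condition~2 (agreement across $\widehat{\tree}$ and $\check{\tree}$ on the $\psi$-side) gives $f_1=0$ again and $f_2=0$.

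Next I would handle the interior edge. Equation~\eqref{eq:interior_pars} expresses $\overline M^\gamma_{0\cdot},\overline M^\gamma_{1\cdot}$ (the "effective" transition from $\wurzel$ to $\gamma$ seen in the tripod $\overline{\tree}$) as a convex-type combination of $\widehat M^\gamma_{0\cdot},\widehat M^\gamma_{1\cdot}$ (the genuine $\psi\to\gamma$ transition) mediated by $\bm M^\psi$. Since $\widehat M^\gamma_{01}\ne\widehat M^\gamma_{11}$ (this is guaranteed because $\coa_{\gamma\delta}\ne0$ forces $\bm M^\gamma$ non-degenerate in the tripod $\widehat{\tree}$, using $\chi_{\alpha\gamma\delta}\ne0$), the linear system \eqref{eq:interior_pars} is invertible and yields
\begin{equation*}
M^\psi_{01}=\frac{\overline M^\gamma_{01}-\widehat M^\gamma_{01}}{\widehat M^\gamma_{11}-\widehat M^\gamma_{01}},\qquad
M^\psi_{11}=\frac{\overline M^\gamma_{11}-\widehat M^\gamma_{01}}{\widehat M^\gamma_{11}-\widehat M^\gamma_{01}}.
\end{equation*}
Plugging in the explicit forms from \eqref{eq:sol} for the $\gamma$-entries — noting $\widehat M^\gamma_{11}-\widehat M^\gamma_{01}=\sqrt{\chi_{\alpha\gamma\delta}}/\coa_{\alpha\gamma}$ and similarly $\overline M^\gamma_{z1}=\era_\gamma+(\coa_{\alpha\beta\gamma}\pm\sqrt{\chi_{\alpha\beta\gamma}})/(2\coa_{\alpha\beta})$ — should produce exactly formula \eqref{eq:interior} after simplification. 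The parallel pair of equations with $\delta$ in place of $\gamma$ and the tripod $\widetilde{\tree}$ in place of $\widehat{\tree}$ must give the \emph{same} $\bm M^\psi$; equating the two expressions for $M^\psi_{01}$ (or $M^\psi_{11}$) is where the remaining invariant $f_0=0$ should emerge, after clearing denominators and using the definition of $\era_{\alpha\beta\gamma\delta}=\pi_{1111}$ and the covariance expansions. So the bookkeeping is: $f_1$ from the four-point topology constraint, $f_3$ and $f_2$ from the two root-matching conditions on the $\wurzel$- and $\psi$-sides, and $f_0$ from consistency of the interior matrix computed via $\gamma$ versus via $\delta$; that is exactly four invariants, matching the degrees-of-freedom count ($15-11=4$).

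The main obstacle I anticipate is purely computational stamina rather than conceptual: verifying that the substitution of \eqref{eq:sol} into the gluing conditions collapses \emph{precisely} to the four stated polynomials, with no extraneous factors and no sign ambiguity, requires careful tracking of which $\sqrt{\chi}$ branch is used at each interior node and of the state-flip symmetry from Lemma~\ref{lem:stateflip}. In particular, when clearing the square-root denominators one must check that the "spurious" solutions introduced (from $\sqrt{\chi}\mapsto-\sqrt{\chi}$) correspond exactly to the allowed state flip at $\wurzel$ or $\psi$, so that uniqueness-up-to-interior-flip is genuinely preserved and no further invariant is secretly needed. I would organise this by first proving the reduction on the $\wurzel$-side in full (getting $f_3$, and re-deriving $f_1$), then invoking the $\alpha\leftrightarrow$(role of interior leaves) symmetry of the construction to get the $\psi$-side statement ($f_2$, $f_1$ again) almost for free, and finally doing the interior-edge computation once for $\gamma$ to get \eqref{eq:interior} and once more to extract $f_0$. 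The converse direction of the "if" is all that is claimed, so I do not need to show these invariants are also necessary — only that their vanishing, plus \eqref{eq:alg} on all tripods, suffices to assemble a consistent set of quartet process parameters, which the above construction does by exhibiting them explicitly.
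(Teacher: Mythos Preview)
Your overall architecture matches the paper's proof: derive $f_1,f_2,f_3$ from the gluing conditions 1 and 2 by substituting the closed forms \eqref{eq:sol} and comparing, then solve \eqref{eq:interior_pars} for $\bm M^\psi$ to obtain \eqref{eq:interior}. That part is fine and essentially identical to what the paper does.

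The gap is in your treatment of $f_0$. You claim that equating the two expressions for $\bm M^\psi$ --- one computed via $\gamma$ from $(\overline{\tree},\widehat{\tree})$, the other via $\delta$ from $(\widetilde{\tree},\widehat{\tree})$ --- will produce $f_0$. It cannot. Both expressions for $\bm M^\psi$ are built entirely from \emph{tripod} data: the quantities $\overline M^\gamma,\widehat M^\gamma,\widetilde M^\delta,\widehat M^\delta$ depend only on the moments $\era_A$ with $|A|\le 3$, and $\era_{\alpha\beta\gamma\delta}$ never appears. If you carry out the comparison you will find that equality of $M^\psi_{11}+M^\psi_{01}$ via the two routes reduces to $\coa_{\alpha\delta}\coa_{\alpha\beta\gamma}=\coa_{\alpha\gamma}\coa_{\alpha\beta\delta}$, i.e.\ $f_3=0$, and equality of $M^\psi_{11}-M^\psi_{01}$ reduces to $\coa_{\alpha\delta}^2\chi_{\alpha\beta\gamma}=\coa_{\alpha\gamma}^2\chi_{\alpha\beta\delta}$, which is a consequence of $f_1=0$ and $f_3=0$. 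So condition 3 yields nothing beyond $f_1,f_2,f_3$, and you are one invariant short.

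The invariant $f_0$ is the condition that the single moment not seen by any tripod, namely $\era_{\alpha\beta\gamma\delta}$, agrees with the value predicted by the assembled parameters. In the paper this is obtained by writing the quartet factorisation for $\era_{\alpha\beta\gamma\delta}$,
\[
\era_{\alpha\beta\gamma\delta}=\sum_{z,w\in\{0,1\}} q^\wurzel_z\,\oM^\alpha_{z1}\oM^\beta_{z1}\,M^\psi_{zw}\,\wM^\gamma_{w1}\wM^\delta_{w1},
\]
inserting the explicit forms from \eqref{eq:sol} and \eqref{eq:interior}, and simplifying; the result is $f_0(\bm\pi)=0$. Your proposal needs this step (or an equivalent use of the four-way moment) in place of the redundant $\bm M^\psi$-consistency check.
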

%===================================

The existence of these invariants means that tree decomposable quartet distributions form a Lebesgue zero set in the set of all quartet distributions for the same reason that the non-identifiable sets are a Lebesgue zero set in the set of all tree decomposable distributions.

Invariant $f_1$ comes from the equality of the marginal distributions at the interior nodes, as proposed by \citet{pearl1986}. Invariants $f_2$ and $f_3$ come from the equality of edge transition matrices. Hence, distributions for which $f_1,\,f_2$ and $f_3$ vanish will uniquely identify topology $Q$. Therefore, $f_1$ to $f_3$ are topologically informative.

However, only distributions for which $f_0$ vanishes will be subject to the inferred parameters. In other words, in the set of zero points for $f_1$ to $f_3$ there is a set of distributions that returns the same set of parameters for $Q$, but only for one of these distributions $f_0$ vanishes. It would be interesting to investigate how this distribution relates to the set it projects from, e.g. if it is related to the possible maximum likelihood optimum.

Despite the fact that $f_1$ to $f_3$ are sufficient to infer a topology, $f_0$ is also topologically informative in that it will not vanish for distributions coming from another tree.

In the case of the CFN model, all triplet covariances vanish. Hence, only invariants $f_0$ and $f_1$ are of interest in that case. Therefore, either invariant is sufficient to identify the associated tree topology.

The parameters for the interior edge do not add more non-identifiable cases. However, as in the tripod case, further conditions are needed to guarantee quartet identifiability.

%===================================
\begin{prop}\label{prop:stoch}
A quartet distribution is stochastically identifiable if and only if every triplet marginalisations satisfies both Theorem \ref{thm:stoch} and the following inequalities 
%-----------------------------------
\begin{linenomath}
\begin{align}\label{ineq:one}
\coa_{\alpha\delta}\sqrt{\chi_{\alpha\beta\gamma}}-\coa_{\alpha\beta}\sqrt{\chi_{\alpha\gamma\delta}}&\le\coa_{\alpha\beta}\coa_{\alpha\gamma\delta}-\coa_{\alpha\delta}\coa_{\alpha\beta\gamma}\le\coa_{\alpha\beta}\sqrt{\chi_{\alpha\gamma\delta}}-\coa_{\alpha\delta}\sqrt{\chi_{\alpha\beta\gamma}}.
\end{align}
\end{linenomath}
%-----------------------------------
\end{prop}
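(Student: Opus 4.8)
We may assume $\bm\pi$ is algebraically identifiable on $Q$: for the forward implication this is automatic, since stochastic identifiability entails algebraic identifiability, and for the converse it is the standing assumption inherited from Proposition~\ref{prop:algext}. Then each tripod marginalisation obeys~\eqref{eq:alg}, the invariants $f_0,\dots,f_3$ vanish on $\bm\pi$, and the process parameters on $Q$ are, up to state flips at $\wurzel$ and $\psi$, uniquely given by~\eqref{eq:sol} (for the three edges meeting $\wurzel$ and the three meeting $\psi$) together with~\eqref{eq:interior} for the interior edge $(\wurzel,\psi)$. Stochastic identifiability is then equivalent to this unique parameter set consisting of genuine probabilities, i.e.\ to every entry of $q^\wurzel$, $\bm M^\alpha$, $\bm M^\beta$, $\bm M^\gamma$, $\bm M^\delta$ and $\bm M^\psi$ lying in $[0,1]$. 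The plan is to handle the leaf-edge parameters and the root distributions through Theorem~\ref{thm:stoch}, and the interior-edge parameter $\bm M^\psi$ through a direct computation reproducing~\eqref{ineq:one}.

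First I would identify $q^\wurzel,\bm M^\alpha,\bm M^\beta$ with the parameters Theorem~\ref{thm:alg} assigns to the tripod $\overline{\tree}=\tree_{\alpha\beta\gamma}$ (the marginalisation of $\bm\pi$ over $\delta$; these coincide with those of $\widetilde{\tree}$ by algebraic identifiability on $Q$), and $q^\psi,\bm M^\gamma,\bm M^\delta$ with the parameters Theorem~\ref{thm:alg} assigns to $\widehat{\tree}=\tree_{\alpha\gamma\delta}$ (coinciding with those of $\check{\tree}$). This requires the re-rooting at $\psi$ that writes the $\alpha$- and $\beta$-edges of $\widehat{\tree},\check{\tree}$ as $\bm M^\psi$-weighted mixtures of the rows of $\bm M^\alpha,\bm M^\beta$, together with the observation that the remaining composite matrices $\bm M^\psi\bm M^\gamma$ and $\bm M^\psi\bm M^\delta$ arising in $\overline{\tree},\widetilde{\tree}$ are stochastic once $\bm M^\psi,\bm M^\gamma,\bm M^\delta$ are. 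Hence all these entries lie in $[0,1]$ after state flips precisely when each of the four tripod marginalisations is stochastically identifiable, which by Theorem~\ref{thm:stoch} is the statement that its pairwise and conditional covariances satisfy~\eqref{eq:probs} after flips. The delicate point is that Theorem~\ref{thm:stoch} permits a separate flip per tripod whereas we need a single flip of $\alpha,\beta,\gamma,\delta$ inducing them all: stochastic identifiability of a tripod forces the product of its three pairwise covariances to be positive (Lemma~\ref{lem:indi}(2) and~\eqref{eq:probs}), so all four triple products over $\{\alpha,\beta,\gamma,\delta\}$ are positive; viewing the six pairwise covariances as the edges of $K_4$ and the four triple products as its triangles, the sign pattern $(\sgn\coa_{ij})$ is balanced, so by Lemma~\ref{lemma:flip} one flip makes all six pairwise covariances positive, and in that frame the conditional-covariance part of~\eqref{eq:probs} for the four tripods is exactly the asserted requirement.

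Second, staying in this frame, I would substitute the sign information into~\eqref{eq:interior}. The denominator $2\coa_{\alpha\beta}\sqrt{\chi_{\alpha\gamma\delta}}$ is positive ($\coa_{\alpha\beta}>0$ and $\chi_{\alpha\gamma\delta}=\coa_{\alpha\gamma\delta}^2+4\coa_{\alpha\gamma}\coa_{\alpha\delta}\coa_{\gamma\delta}>0$, and likewise $\chi_{\alpha\beta\gamma}>0$), and $M^\psi_{01}\le M^\psi_{11}$. Setting $m:=\coa_{\alpha\beta}\coa_{\alpha\gamma\delta}-\coa_{\alpha\delta}\coa_{\alpha\beta\gamma}$, one checks that $M^\psi_{01}\ge0$ is equivalent to $m\le\coa_{\alpha\beta}\sqrt{\chi_{\alpha\gamma\delta}}-\coa_{\alpha\delta}\sqrt{\chi_{\alpha\beta\gamma}}$, that $M^\psi_{11}\le1$ is equivalent to $m\ge\coa_{\alpha\delta}\sqrt{\chi_{\alpha\beta\gamma}}-\coa_{\alpha\beta}\sqrt{\chi_{\alpha\gamma\delta}}$, and that the other two bounds $M^\psi_{01}\le1$, $M^\psi_{11}\ge0$ are strictly weaker, each differing from the corresponding binding one by $2\coa_{\alpha\delta}\sqrt{\chi_{\alpha\beta\gamma}}>0$; the two binding bounds together are precisely~\eqref{ineq:one}. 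Combining the two parts yields the equivalence. The main obstacle is not this closing sign-chase but the bookkeeping of the first part --- pairing each of the eleven parameters of $Q$ with the Theorem~\ref{thm:alg} output of the correct tripod, carrying out the re-rooting at $\psi$, and verifying that the per-tripod flips of Theorem~\ref{thm:stoch} descend from one flip of the quartet leaves.
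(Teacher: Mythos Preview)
Your approach is essentially the paper's: invoke Theorem~\ref{thm:stoch} on the tripod marginalisations to control $q^\wurzel,\bm M^\alpha,\bm M^\beta,\bm M^\gamma,\bm M^\delta$, and then bound the interior-edge entries in~\eqref{eq:interior} between $0$ and $1$ to obtain~\eqref{ineq:one}. Your identification of $M^\psi_{01}\ge0$ and $M^\psi_{11}\le1$ as the binding constraints (the other two being weaker by $2\coa_{\alpha\delta}\sqrt{\chi_{\alpha\beta\gamma}}$) is correct and in fact more explicit than the paper, which simply lists the four raw inequalities without reducing them; likewise your $K_4$-balance argument for a single consistent leaf flip makes precise a point the paper leaves implicit (it is covered by Corollary~\ref{coro:positiveflipping}, which you could cite instead).
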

%===================================

All other relations are covered due to the fact that the quartet distribution $\dens$ needs to satisfy the invariants $f_0-f_3$. The following example provides a very nice case in which reconstruction is not possible but offers a very interesting challenge.

%$$$$$$$$$$$$$$$$$$$$$$$$$$$$$$$$$$$
\begin{example}\label{exa:chor}
\citet{chor2000} discussed several examples of distributions with multiple maxima of the likelihood function. These examples relate to the CFN model, i.e., $p_{abcd}=p_{(1-a)(1-b)(1-c)(1-d)}$ so that the Hadamard approach can be used. Regard the symmetric distribution
%-----------------------------------
\begin{linenomath}
\begin{equation}\label{eq:chor}
\dens=(14,0,0,3,0,2,1,0,0,1,2,0,3,0,0,14)/40.
\end{equation}
\end{linenomath}
%-----------------------------------
Retrieving the statistics yields:
%-----------------------------------
\begin{linenomath}
\begin{align*}
\coa_{\alpha\beta}=7/40=\coa_{\gamma\delta},\quad
\coa_{\alpha\gamma}&=3/20=\coa_{\beta\delta},\quad
\coa_{\alpha\delta}=1/8=\coa_{\beta\gamma},\\
\coa_{\alpha\beta\gamma}=\coa_{\alpha\beta\delta}&=\coa_{\alpha\gamma\delta}=\coa_{\beta\gamma\delta}=0.
\end{align*}
\end{linenomath}
%-----------------------------------
The last equality immediately shows, that the above distribution will trivially satisfy invariants $f_2$ and $f_3$. However, we get $f_1=-11/1600$ and $f_0=-23/375$, i.e. our observations do not come from the quartet tree defined by the bipartition $\alpha\beta|\gamma\delta$. Looking at the alternative invariants for $f_1$, i.e. at
%-----------------------------------
\begin{linenomath}
\begin{align*}
f^{\alpha\delta|\beta\gamma}_1&=\coa_{\alpha\beta}\coa_{\gamma\delta}-\coa_{\alpha\gamma}\coa_{\beta\delta}=13/1600,\\
f^{\alpha\gamma|\beta\delta}_1&=\coa_{\alpha\beta}\coa_{\gamma\delta}-\coa_{\alpha\delta}\coa_{\beta\gamma}=3/200,
\end{align*}
\end{linenomath}
%-----------------------------------
we see that this distribution comes from none of the available quartet trees.

Nevertheless, we shall have a look at the parameters. Note that the symmetry of the distribution $\dens$ implies $M^\alpha_{01}=1-M^\alpha_{11}=:\overline{M}_\alpha$. Looking at the numerical values for the parameters for every tripod tree we find surprising similarities:

%-----------------------------------
\begin{table}[h]
\begin{center}
\begin{tabular}{r|c|c|c|c|c}
triplet&$M^\alpha_{01}$&$M_\beta$&$M_\gamma$&$M_\delta$&$q_\wurzel$\\\hline
$\alpha\beta\gamma$&0.0417424&0.118119&0.172673&0&0.5\\
$\alpha\beta\delta$&0.118119&0.0417424&0&0.172673&0.5\\\hline
$\alpha\gamma\delta$&0.172673&0&0.0417424&0.118119&0.5\\
$\beta\gamma\delta$&0&0.172673&0.118119&0.0417424&0.5
\end{tabular}
\caption{\label{tab:values}The parameters for each triplet.}
\end{center}
\end{table}
%-----------------------------------

These parameters permit us to infer parameters $M_\wurzel=1/14$ and $M_\psi=1/7$ such that e.g. the parameters for $\alpha$ on the tripod trees $\alpha\beta\delta$ and $\alpha\gamma\delta$ can be obtained from the parameter for tripod tree $\alpha\beta\gamma$ by
%-----------------------------------
\begin{linenomath}
\begin{align*}
\widetilde{M}_\alpha&=M_\wurzel(1-\overline{M}_\alpha)+(1-M_\wurzel)\overline{M}_\alpha,\quad
\widehat{M}_\alpha=M_\psi(1-\overline{M}_\alpha)+(1-M_\psi)\overline{M}_\alpha,
\end{align*}
\end{linenomath}
%-----------------------------------
with analogue assignments for the other leaves. These computations can be visualised by the network in Fig. \ref{fig:network}. The assignment of probabilities for each split permits to justify the observations for each of the four tripod trees. However, the visualisation is misleading because the factorisation of the system does not follow the edges in the network \citep[e.g.,][]{churchill1993,strimmer2000,bryant2005}. 

%$$$$$$$$$$$$$$$$$$$$$$$$$$$$$$$$$$$
\begin{figure}[htb]
\begin{center}
\includegraphics[width=\textwidth]{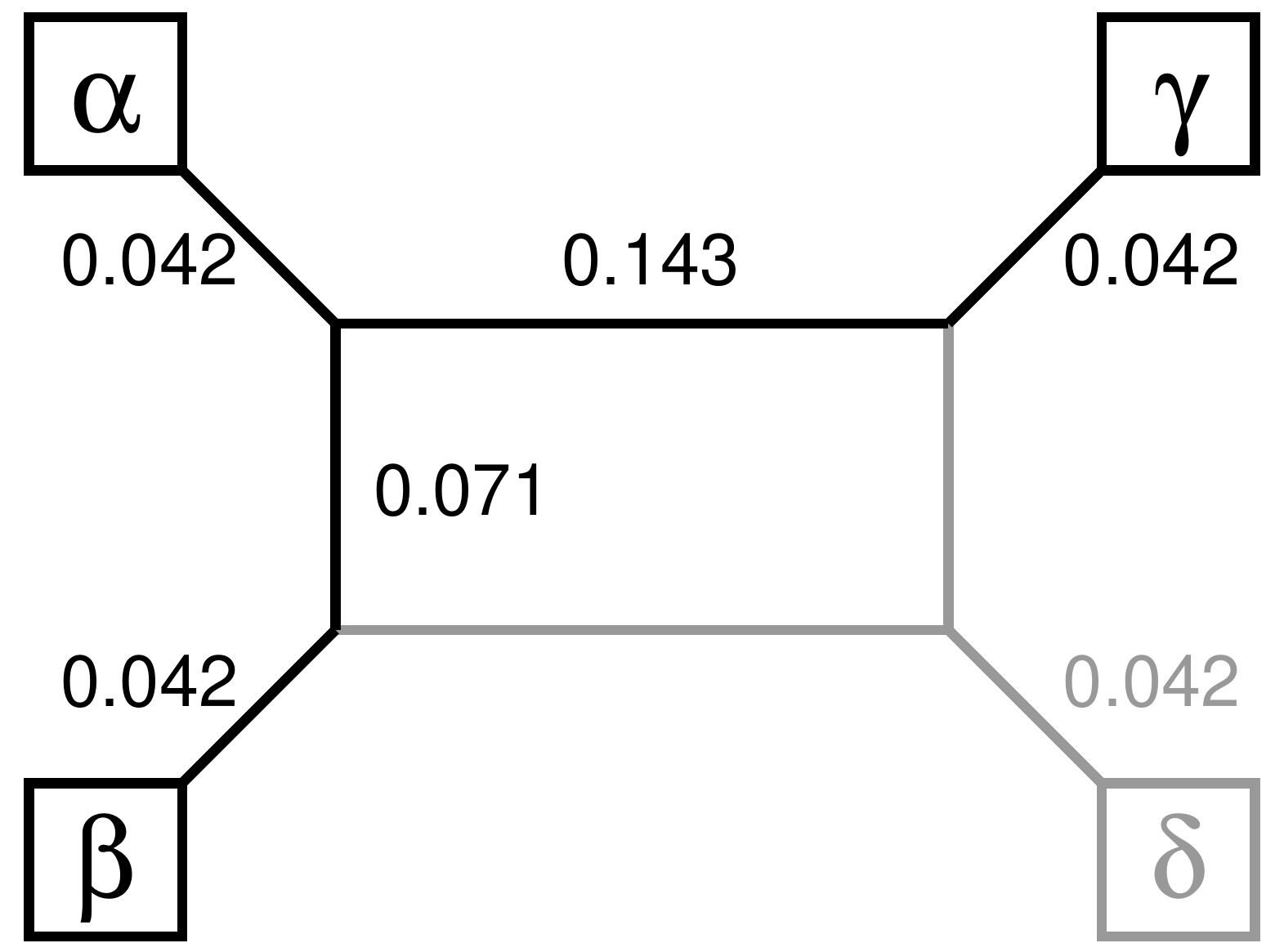}%
\caption{\label{fig:network}Assignment of mutation probability from the symmetric distribution in \eqref{eq:chor}. The black lines indicate the triplet $\alpha\beta\gamma$. Assigned branch lengths are rounded values.}
\end{center}
\end{figure}
%$$$$$$$$$$$$$$$$$$$$$$$$$$$$$$$$$$$

\end{example}
%$$$$$$$$$$$$$$$$$$$$$$$$$$$$$$$$$$$

%%%%%%%%%%%%%%%%%%%%%%%%%%%%%%%%%%%%
\section{The connection with Markov invariants}\label{sect:markovinv}

This section investigates the connection between the work presented here and the concept of Markov invariants as coined by \citet{sumner2008}. To show these relations we will look back at our covariances and investigate their relationship with the parameters.

Following \citet{allman2008} one can write the three-way-probabilities as a $2\times2\times2$ tensor $\bm{P}^{\alpha\beta\gamma}$ such that
%-----------------------------------
\begin{linenomath}
\begin{equation*}
\bm{P}^{\alpha\beta|0}=\begin{pmatrix}
p_{000}&p_{010}\\p_{100}&p_{110}
\end{pmatrix},\quad
\bm{P}^{\alpha\beta|1}=\begin{pmatrix}
p_{001}&p_{011}\\p_{101}&p_{111}
\end{pmatrix}.
\end{equation*}
\end{linenomath}
%-----------------------------------
With this as a basis we easily infer our pairwise covariances in terms of determinants of dimensional restrictions of $\bm{P}^{\alpha\beta\gamma}$. E.g., a marginalisation over $\gamma$ corresponds to $\bm{P}^{\alpha\beta\str}=\bm{P}^{\alpha\beta|0}+\bm{P}^{\alpha\beta|1}$. The determinant of this matrix then corresponds to
%-----------------------------------
\begin{linenomath}
\begin{align*}
\det \bm{P}^{\alpha\beta\str}&=p_{00\str}p_{11\str}-p_{01\str}p_{10\str}\\
&=p_{11\str}(p_{00\str}+p_{11\str}+p_{01\str}+p_{10\str})-(p_{10\str}+p_{11\str})(p_{01\str}+p_{11\str})\\
&=p_{11\str}-p_{1\str\str}p_{\str1\str}=\coa_{\alpha\beta}.
\end{align*}
\end{linenomath}
%-----------------------------------
Thus, we have invariably obtained an alternative way to compute the covariances. In a similar fashion, if we take the determinant of the conditional kernels $\bm{P}^{\alpha\beta|c},\,c\in\{0,1\}$, we arrive at the (not normalised) conditional covariance $\coa_{\alpha\beta|c}$:
%-----------------------------------
\begin{linenomath}
\begin{align*}
\det \bm{P}^{\alpha\beta|c}&=p_{00c}p_{11c}-p_{01c}p_{10c}\\
&=p_{11c}(p_{\str\str c}-p_{01c}-p_{10c}-p_{11c})-(p_{\str1c}-p_{11c})(p_{1\str c}-p_{11c})\\
&=p_{11c}p_{\str\str c}-p_{\str1c}p_{1\str c}=\coa_{\alpha\beta|c}.
\end{align*}
\end{linenomath}
%-----------------------------------
It must be noted that the determinant has been used earlier in connection with LogDet families \citep[e.g.,][]{steel1994}. In order to relate these findings to the process parameter, let us denote by $\bm{\Pi}=\text{diag}(\bm{q}^\wurzel)$ the diagonal matrix of the marginal distribution at the root, and with
%-----------------------------------
\begin{linenomath}
\begin{align*}
\bm{M}^\alpha=\begin{pmatrix}
M^\alpha_{00}&M^\alpha_{01}\\M^\alpha_{10}&M^\alpha_{11}.
\end{pmatrix}
\end{align*}
\end{linenomath}
%-----------------------------------
the transition matrix for leaf $\alpha$. Then the marginalisation of Equation \eqref{eq:leaftotal} can be written as
%-----------------------------------
\begin{linenomath}
\begin{equation}\label{eq:pairmat}
\bm{P}^{\alpha\beta\str}=(\bm{M}^\alpha)^{\text{T}}\overline{\bm{\Pi}}\bm{M}^\beta,
\end{equation}
\end{linenomath}
%-----------------------------------
where $\overline{\bm{\Pi}}$ is the marginal distribution at the most recent common ancestor of $\alpha$ and $\beta$. If $\mathcal{E}_{\alpha\beta}$ is defined as the set of edges connecting the root of the tree and the most recent common ancestor of $\alpha$ and $\beta$ then we compute $\overline{\bm{\Pi}}$ by
%-----------------------------------
\begin{linenomath}
\begin{equation*}
\overline{\bm{\Pi}}=\bm{\Pi}\prod_{e\in\mathcal{E}_{\alpha\beta}}\bm{M}^e.
\end{equation*}
\end{linenomath}
%-----------------------------------
If we take the determinant on both sides of Eq. \eqref{eq:pairmat} we get
%-----------------------------------
\begin{linenomath}
\begin{equation*}
\det\bm{P}^{\alpha\beta\str}=\det\bm{M}^\alpha\det\bm{M}^\beta\det\bm{\Pi}\prod_{e\in\mathcal{E}_{\alpha\beta}}\det\bm{M}^e.
\end{equation*}
\end{linenomath}
%-----------------------------------
We further observe that the determinant in the two-state-case is equal to
%-----------------------------------
\begin{linenomath}
\begin{equation*}
\det\bm{M}^\alpha=1-M^\alpha_{00}-M^\alpha_{11}=-(M^\alpha_{11}-M^\alpha_{01}).
\end{equation*}
\end{linenomath}
%-----------------------------------
Going back to a tripod tree under the two-state-model this yields the relation
%-----------------------------------
\begin{linenomath}
\begin{equation}\label{eq:treepaircov}
\coa_{\alpha\beta}=(M^\alpha_{11}-M^\alpha_{01})(M^\beta_{11}-M^\beta_{01})q^\wurzel_1(1-q^\wurzel_1).
\end{equation}
\end{linenomath}
%-----------------------------------
This relation has been observed in \citet{steel1994} and forms the basis for LogDet inference. The covariances $\coa_{\alpha\beta}$ also form the simplest form of \emph{Markov invariants}. \citet{sumner2008} define these terms in general by:
%-----------------------------------
\begin{linenomath}
\begin{equation}\label{eq:markovinvariant}
f(\dens)=g(\widehat{\dens})\prod_{e\in\edge}(\det\bm{M}^e)^{k_e},
\end{equation}
\end{linenomath}
%-----------------------------------
with $k_e\in\mathbb{Z}$ denoting the exponent for edge $e\in\edge$. The term $g(\widehat{\dens})$ describes a function depicting the relationship of a reduced structure in the tree. \citet{sumner2008} give one example of such a reduced structure as the tree for which the pendant edges have been reduced to length zero. In the case of the tripod tree this reduced structure corresponds to the interior node $\wurzel$, and hence the distribution $\widehat{\dens}$ is equivalent to $\bm{q}^\wurzel$ only. In this setting, Markov invariants are one-dimensional ``representations'' of the stochastic models used for inference, such that the complex structure of these models is retained \citep{sumner2009}.

In our framework, we rediscover more Markov invariants of type \eqref{eq:markovinvariant} when investigating how the remaining covariances are related to the process parameters under the tripod equations \eqref{eq:tripodtotal}. In fact, we find:
%-----------------------------------
\begin{linenomath}
\begin{align}
\label{eq:treetriplecov}
\coa_{\alpha\beta\gamma}&=(M^\alpha_{11}-M^\alpha_{01})(M^\beta_{11}-M^\beta_{01})(M^\gamma_{11}-M^\gamma_{01})q^\wurzel_1(1-q^\wurzel_1)(1-2q^\wurzel_1),\\
\label{eq:product}
\coa_{\alpha\beta}\coa_{\alpha\gamma}\coa_{\beta\gamma}&=
(M^\alpha_{11}-M^\alpha_{01})^2(M^\beta_{11}-M^\beta_{01})^2(M^\gamma_{11}-M^\gamma_{01})^2(q^\wurzel_1)^3(1-q^\wurzel_1)^3,\\
\label{eq:chimarkov}
\chi&=(M^\alpha_{11}-M^\alpha_{01})^2(M^\beta_{11}-M^\beta_{01})^2(M^\gamma_{11}-M^\gamma_{01})^2(q^\wurzel_1)^2(1-q^\wurzel_1)^2.
\end{align}
\end{linenomath}
%-----------------------------------
with equivalent terms for the other covariances. These equivalences permit a different way to prove Theorem \ref{thm:alg} from the one we present in \ref{app:proofs}.

It should be noted that our interpretation of the above Markov invariants as covariances only works for the two state model. On the other hand, the form of the Markov invariants stays valid, even though they might not be as immediately apparent from the model as in the cases discussed here. However, in the case of the two-state model using the notion of covariance permits a good interpretation of the findings.

We observe for the (not normalised) conditional covariances
%-----------------------------------
\begin{linenomath}
\begin{align}
\label{eq:treecondpaircov}
\coa_{\alpha\beta|c}&=(M^\alpha_{11}-M^\alpha_{01})(M^\beta_{11}-M^\beta_{01})M^\gamma_{0c}M^\gamma_{1c}q^\wurzel_1(1-q^\wurzel_1),
%\coa_{\alpha\beta|c}&=\frac{(M^\alpha_{11}-M^\alpha_{01})(M^\beta_{11}-M^\beta_{01})M^\gamma_{0c}M^\gamma_{1c}q^\wurzel_1(1-q^\wurzel_1)}{((1-q^\wurzel_1)M^\gamma_{0c}+q^\wurzel_1M^\gamma_{1c})^2},\quad c\in\{0,1\}.
\end{align}
\end{linenomath}
%-----------------------------------
i.e., the transition matrix for leaf $\gamma$ shall be included into the term $g(\widehat{\dens})$ for \eqref{eq:markovinvariant} to be valid. On the other hand, remember that we did not use these covariances to solve the tripod equations \eqref{eq:tripodtotal}. We need them only to formulate the positivity constraints in Theorem \ref{thm:stoch}. This property is beyond the purely algebraic framework.

In summary, Markov invariants are very useful when investigating properties of and conditions on leaf distributions $\dens$. Especially, they explore the relationship of process parameters and leaf distribution such that phylogenetic invariants like $f_1$ to $f_3$ from Proposition \ref{prop:algext} can be easily extracted. We will employ these relationships to prove the results of Section \ref{sect.tree}.

%%%%%%%%%%%%%%%%%%%%%%%%%%%%%%%%%%%%
\section{Discussion}\label{sect:discussion}

\change{In this work we present a full analysis of the general two-state Markov model on tripod and quartet trees. We derive previous results from \citet{chang1996,lazarsfeld1968,pearl1986} following an alternative approach and obtain additional characterisations of joint leaf distributions with respect to tree decomposability, algebraic identifiability, and stochastic identifiability along the way.  In particular, we find conditions under which a tripod distribution falls into neither categories. To demonstrate the strength of the bounds between these classes we give several artificial examples of tripod distribution which are seemingly close but still quite distinct.}

\change{While tripod identifiability is a worthy goal it is naught without investigating the possibility of extending the results to more taxa. We address this by looking at quartet trees and derive additional conditions on the leaf distribution such that an underlying Markov process is identifiable. In this case the conditions correspond to phylogenetic invariants, i.e. they contain phylogenetic information and are unique for a given phylogeny. We also use an example by \citet{chor2000} to indicate the shortcomings of the factorisation property, when faced with noisy data. The quartet distribution violates the phylogenetic invariants. However, the four tripod distributions obtained by appropriate marginalisations from the quartet distribution all are stochastically identifiable. The obtained parameters suggest a network structure. However, interpreting these parameters on such a network structure is difficult, because the four interior nodes are not independent and therefore assigning a state without further information is complicated. Extensions of tree decomposability to networks has been discussed in \citet{churchill1993,strimmer2000,bryant2005}.}

\change{In summary, our results show that the algebraic approach to phylogenetic and model identifiability is indeed very powerful, but that there is a need to discuss further properties like positivity of distributions and robustness of samples with respect to small perturbations of the leaf distribution.}

%###################################
\paragraph{Acknowledgements}
We thank  Elizabeth S.  Allman and John A. Rhodes for stimulating the finalization of the manuscript as well as for sharing their thoughts on this subject with us. Further, we owe much to the discussions with David Bryant, Mike Steel, and Arndt von Haeseler. Suggestions from Jessica Leigh and three anonymous referees are greatly appreciated.
% Dave Bryant, Ellen Baake?...

%%%%%%%%%%%%%%%%%%%%%%%%%%%%%%%%%%%%
%%% Bib
%%%%%%%%%%%%%%%%%%%%%%%%%%%%%%%%%%%%

%%%%%%%%%%%%%%%%%%%%%%%%%%%%%%%%%%%%
\appendix

%%%%%%%%%%%%%%%%%%%%%%%%%%%%%%%%%%%%
\section{Proofs}\label{app:proofs}

%===================================
\begin{proof}[Proof of Lemma \ref{lemma:flip}]
A state flip replaces the probabilities at leaf $\alpha$ implies a ``new'' distribution $\widehat{\dens}$ with $\widehat{p}_{abc}=p_{(1-a)bc},\,a,b,c\in\{0,1\}$. This has the following implications to the covariances.
%-----------------------------------
\begin{linenomath}
\begin{align*}
\coa_{\alpha\beta}&=p_{11\str}-p_{1\str\str}p_{\str1\str}=(p_{\str1\str}-p_{01\str})-p_{1\str\str}p_{\str1\str}\\
&=-p_{01\str}+p_{\str1\str}(1-p_{1\str\str})=-(p_{01\str}-p_{0\str\str}p_{\str1\str})\\
&=-(\widehat{p}_{11\str}-\widehat{p}_{1\str\str}\widehat{p}_{\str1\str})=-\widehat{\coa}_{\alpha\beta}.
\end{align*}
\end{linenomath}
%-----------------------------------
and analogously $\widehat{\coa}_{\alpha\gamma}=-\coa_{\alpha\gamma}$ and $\widehat{\coa}_{\beta\gamma}=\coa_{\beta\gamma}$. Thus, if $\coa_{\alpha\beta}$ and $\coa_{\alpha\gamma}$ are smaller than zero, then a state flip produces positive covariances and the sign for the overall product remains the same.
\end{proof}

%===================================
\begin{proof}[Proof of Lemma \ref{lem:indi}]
Using the Markov invariants from Section \ref{sect:markovinv} we immediately see, that if $\coa_{\alpha\beta}=0$ due to $M^\alpha_{01}-M^\alpha_{11}=0$ then also $\coa_{\alpha\gamma}=0$ and $\coa_{\alpha\beta\gamma}=0$. If $q^\wurzel_1\in\{0,1\}$ then all four covariances are zero.

For point 2 regard \eqref{eq:product}. But this term will be non-negative as long as $q^\wurzel_1$ is a probability, which is a model condition. This completes the proof.
\end{proof}

%===================================
\begin{proof}[Proof of Corollary \ref{coro:positiveflipping}]
Select one leaf $\alpha\in L$ and define $L_0=\{\beta:\coa_{\alpha\beta}<0\}$. Flipping the states in $L_0$ gives us $\coa_{\alpha\beta}>0$ for all $\beta\in L,\beta\ne\alpha$ by Lemma \ref{lemma:flip}. Fix now $\beta\ne \beta'\in L\setminus\{\alpha\}$. Then $\alpha,\beta,\beta'$, together with the root $\wurzel$ of the tree, define uniquely a tripod tree and the restriction of $\dens$ to $\alpha,\beta,\beta'$ must  obey the tripod equations. Using  Lemma \ref{lem:indi}(2), on this tripod tree  shows now that $\coa_{\alpha\beta}\coa_{\alpha\beta'}\coa_{\beta\beta'}>0$. This implies that $\coa_{\beta\beta'}$ is positive, too.  
\end{proof}

%===================================
\begin{proof}[Proof of Corollary \ref{coro:notreepair}]
A triplet distribution $\dens$ for which only one covariance is zero does not satisfy Lemma \ref{lem:indi}(1) and hence is not tripod decomposable. Further, by looking at \eqref{eq:treepaircov} we see that there is also no real- or complex-valued parameter set that would yield only one zero covariance. Hence, such a triplet distribution would also not be algebraically decomposable.
\end{proof}

%===================================
\begin{proof}[Proof of Lemma \ref{lem:stateflip}]
We insert the refined parameters into the tripod equations to get:
%-----------------------------------
\begin{linenomath}
\begin{align*}
p_{abc}&=q^\wurzel_1M^\alpha_{1a}M^\beta_{1b}M^\gamma_{1c}+(1-q^\wurzel_1)M^\alpha_{0a}M^\beta_{0b}M^\gamma_{0c}\\
&=\widehat{q}^\wurzel_0\widehat{M}^\alpha_{0a}\widehat{M}^\beta_{0b}\widehat{M}^\gamma_{0c}+(1-\widehat{q}^\wurzel_0)\widehat{M}^\alpha_{1a}\widehat{M}^\beta_{1b}\widehat{M}^\gamma_{1c}\\
&=(1-\widehat{q}^\wurzel_1)\widehat{M}^\alpha_{0a}\widehat{M}^\beta_{0b}\widehat{M}^\gamma_{0c}+\widehat{q}^\wurzel_1\widehat{M}^\alpha_{1a}\widehat{M}^\beta_{1b}\widehat{M}^\gamma_{1c},
\end{align*}
\end{linenomath}
%-----------------------------------
i.e. the tripod equations are recovered with flipped parameters. This completes the proof.
\end{proof}

%===================================
\begin{proof}[Proof of Theorem \ref{thm:alg}]
We derive the parameters from the tripod equations. As mentioned in Section \ref{sect:statbin} there is a linear relationship between $\dens$ and its marginalisations. Thus, finding a solution for the tripod equations is equivalent to finding the solution for the following set of equations
%-----------------------------------
\begin{linenomath}
\begin{align}
\label{eq:eabc}\era_{\alpha\beta\gamma}&=q^\wurzel_1M^\alpha_{11}M^\beta_{11}M^\gamma_{11}+(1-q^\wurzel_1)M^\alpha_{01}M^\beta_{01}M^\gamma_{01},\\
\label{eq:eab}\era_{\alpha\beta}&=q^\wurzel_1M^\alpha_{11}M^\beta_{11}+(1-q^\wurzel_1)M^\alpha_{01}M^\beta_{01},\\
\label{eq:eac}\era_{\alpha\gamma}&=q^\wurzel_1M^\alpha_{11}M^\gamma_{11}+(1-q^\wurzel_1)M^\alpha_{01}M^\gamma_{01},\\
\label{eq:ebc}\era_{\beta\gamma}&=q^\wurzel_1M^\beta_{11}M^\gamma_{11}+(1-q^\wurzel_1)M^\beta_{01}M^\gamma_{01},\\
\label{eq:ealpha}\era_\alpha&=q^\wurzel_1M^\alpha_{11}+(1-q^\wurzel_1)M^\alpha_{01},\\
\label{eq:ebeta}\era_\beta&=q^\wurzel_1M^\beta_{11}+(1-q^\wurzel_1)M^\beta_{01},\\
\label{eq:egamma}\era_\gamma&=q^\wurzel_1M^\gamma_{11}+(1-q^\wurzel_1)M^\gamma_{01}.
\end{align}
\end{linenomath}
%-----------------------------------
Equations \eqref{eq:ealpha}-\eqref{eq:egamma} yield
%-----------------------------------
\begin{linenomath}
\begin{equation}\label{eq:step1}
(1-q^\wurzel_1)M^\alpha_{01}=\era_\alpha-q^\wurzel_1M^\alpha_{11},\quad
(1-q^\wurzel_1)M^\beta_{01}=\era_\beta-q^\wurzel_1M^\beta_{11},\quad
(1-q^\wurzel_1)M^\gamma_{01}=\era_\gamma-q^\wurzel_1M^\gamma_{11}.
\end{equation}
\end{linenomath}
%-----------------------------------
Inserting \eqref{eq:step1} into \eqref{eq:eab} returns
%-----------------------------------
\begin{linenomath}
\begin{align*}
(1-q^\wurzel_1)\era_{\alpha\beta}&=q^\wurzel_1(1-q^\wurzel_1)M^\alpha_{11}M^\beta_{11}+(\era_\alpha-q^\wurzel_1M^\alpha_{11})(\era_\beta-q^\wurzel_1M^\beta_{11})\\
&=q^\wurzel_1M^\alpha_{11}M^\beta_{11}+\era_\alpha\era_\beta-q^\wurzel_1(\era_\alpha M^\beta_{11}+\era_\beta M^\alpha_{11}),
\end{align*}
\end{linenomath}
%-----------------------------------
and in consequence
%-----------------------------------
\begin{linenomath}
\begin{align}
\label{eq:step2a}q^\wurzel_1M^\beta_{11}(M^\alpha_{11}-\era_\alpha)&=\coa_{\alpha\beta}+q^\wurzel_1(\era_\beta M^\alpha_{11}-\era_{\alpha\beta}),\\
\label{eq:step2b}q^\wurzel_1M^\gamma_{11}(M^\alpha_{11}-\era_\alpha)&=\coa_{\alpha\gamma}+q^\wurzel_1(\era_\gamma M^\alpha_{11}-\era_{\alpha\gamma}).%,\\
%\label{eq:step2c}q^\wurzel_1M^\gamma_{11}(M^\beta_{11}-\era_\beta)&=\coa_{\beta\gamma}+q^\wurzel_1(\era_\gamma M^\beta_{11}-\era_{\beta\gamma}).
\end{align}
\end{linenomath}
%-----------------------------------
We insert \eqref{eq:step2a}-\eqref{eq:step2b} back into \eqref{eq:step1}
%-----------------------------------
\begin{linenomath}
\begin{align*}
(1-q^\wurzel_1)M^\beta_{01}(M^\alpha_{11}-\era_\alpha)&=\era_\beta(M^\alpha_{11}-\era_\alpha)-\coa_{\alpha\beta}-q^\wurzel_1(\era_\beta M^\alpha_{11}-\era_{\alpha\beta})\\
&=(1-q^\wurzel_1)(\era_\beta M^\alpha_{11}-\era_{\alpha\beta}).
\end{align*}
\end{linenomath}
%-----------------------------------
In the case of $q^\wurzel_1=1$ we get from \eqref{eq:ealpha} and \eqref{eq:eab} that $M^\alpha_{11}=\era_\alpha$ and $\era_{\alpha\beta}=\era_\alpha\era_\beta$. Hence, we remove $1-q^\wurzel_1$ from the above equation without destroying equality. Thus, we get
%-----------------------------------
\begin{linenomath}
\begin{align}
\label{eq:step3a}M^\beta_{01}(M^\alpha_{11}-\era_\alpha)&=\era_\beta M^\alpha_{11}-\era_{\alpha\beta},\\
\label{eq:step3b}M^\gamma_{01}(M^\alpha_{11}-\era_\alpha)&=\era_\gamma M^\alpha_{11}-\era_{\alpha\gamma}.
\end{align}
\end{linenomath}
%-----------------------------------
We insert \eqref{eq:step1} in \eqref{eq:eabc} to get
%-----------------------------------
\begin{linenomath}
\begin{equation*}
M^\alpha_{11}\era_{\beta\gamma}-\era_{\alpha\beta\gamma}=M^\beta_{01}M^\gamma_{01}(M^\alpha_{11}-\era_\alpha).
\end{equation*}
\end{linenomath}
%-----------------------------------
Applying \eqref{eq:step3a} and \eqref{eq:step3b} to this gives us
%-----------------------------------
\begin{linenomath}
\begin{align*}
0&=(M^\alpha_{11}\era_{\beta\gamma}-\era_{\alpha\beta\gamma})(M^\alpha_{11}-\era_\alpha)-(\era_\beta M^\alpha_{11}-\era_{\alpha\beta})(\era_\gamma M^\alpha_{11}-\era_{\alpha\gamma})\\
&=(M^\alpha_{11})^2\coa_{\beta\gamma}-M^\alpha_{11}(\coa_{\alpha\beta\gamma}+2\era_\alpha\coa_{\beta\gamma})+\era_{\alpha\beta\gamma}\era_\alpha-\era_{\alpha\beta}\era_{\alpha\gamma}.
\end{align*}
\end{linenomath}
%-----------------------------------
We can apply the solution formula for quadratic equations provided $\coa_{\beta\gamma}\ne0$, i.e. our condition \eqref{eq:alg} is satisfied. In that case we get
%-----------------------------------
\begin{linenomath}
\begin{align}
\notag(M^\alpha_{11})_{\pm}&=\frac{\coa_{\alpha\beta\gamma}+2\era_\alpha\coa_{\beta\gamma}}{2\coa_{\beta\gamma}}\pm\frac{\sqrt{(\coa_{\alpha\beta\gamma}+2\era_\alpha\coa_{\beta\gamma})^2-4(\era_{\alpha\beta\gamma}\era_\alpha-\era_{\alpha\beta}\era_{\alpha\gamma})\coa_{\beta\gamma}}}{2\coa_{\beta\gamma}}\\
\label{eq:ma11}&=\era_\alpha+\frac{\coa_{\alpha\beta\gamma}\pm\sqrt{\coa_{\alpha\beta\gamma}^2+4\coa_{\alpha\beta}\coa_{\alpha\gamma}\coa_{\beta\gamma}}}{2\coa_{\beta\gamma}}.
\end{align}
\end{linenomath}
%-----------------------------------
Thus we have established the term for $M^\alpha_{11}$. The next step is to derive $q^\wurzel_1$. We insert \eqref{eq:step2a}-\eqref{eq:step3b} into \eqref{eq:ebc} and get
%-----------------------------------
\begin{linenomath}
\begin{align*}
q^\wurzel_1(M^\alpha_{11}-\era_\alpha)^2\era_{\beta\gamma}&=
(\coa_{\alpha\beta}+q^\wurzel_1(\era_\beta M^\alpha_{11}-\era_{\alpha\beta}))(\coa_{\alpha\gamma}+q^\wurzel_1(\era_\gamma M^\alpha_{11}-\era_{\alpha\gamma}))\\
&+q^\wurzel_1(1-q^\wurzel_1)(\era_\beta M^\alpha_{11}-\era_{\alpha\beta})(\era_\gamma M^\alpha_{11}-\era_{\alpha\gamma})\\
%&=\coa_{\alpha\beta}\coa_{\alpha\gamma}+\coa_{\alpha\beta}q^\wurzel_1(\era_\gamma M^\alpha_{11}-\era_{\alpha\gamma})+\coa_{\alpha\gamma}q^\wurzel_1(\era_\beta M^\alpha_{11}-\era_{\alpha\beta})\\
%&+q^\wurzel_1(\era_\beta M^\alpha_{11}-\era_{\alpha\beta})(\era_\gamma M^\alpha_{11}-\era_{\alpha\gamma})\\
%&=\coa_{\alpha\beta}\coa_{\alpha\gamma}-q^\wurzel_1(\coa_{\alpha\beta}\era_{\alpha\gamma}+\coa_{\alpha\gamma}\era_{\alpha\beta}-\era_{\alpha\beta}\era_{\alpha\gamma})\\
%&+q^\wurzel_1M^\alpha_{11}(\coa_{\alpha\beta}\era_\gamma+\coa_{\alpha\gamma}\era_\beta-\era_{\alpha\gamma}\era_\beta-\era_{\alpha\beta}\era_\gamma+\era_\beta\era_\gamma M^\alpha_{11})\\
%&=\coa_{\alpha\beta}\coa_{\alpha\gamma}-q^\wurzel_1(\era_{\alpha\beta}\era_{\alpha\gamma}-\era_\alpha\era_\beta\era_{\alpha\gamma}-\era_\alpha\era_\gamma\era_{\alpha\beta})+q^\wurzel_1M^\alpha_{11}\era_\beta\era_\gamma(M^\alpha_{11}-2\era_\alpha)\\
%&=(1-q^\wurzel_1)\coa_{\alpha\beta}\coa_{\alpha\gamma}+q^\wurzel_1\era_\alpha^2\era_\beta\era_\gamma+q^\wurzel_1M^\alpha_{11}\era_\beta\era_\gamma(M^\alpha_{11}-2\era_\alpha)\\
&=(1-q^\wurzel_1)\coa_{\alpha\beta}\coa_{\alpha\gamma}+q^\wurzel_1\era_\beta\era_\gamma(M^\alpha_{11}-\era_\alpha)^2
\end{align*}
\end{linenomath}
%-----------------------------------
and hence we get the quadratic relation
%-----------------------------------
\begin{linenomath}
\begin{equation}
0=(1-q^\wurzel_1)\coa_{\alpha\beta}\coa_{\alpha\gamma}-q^\wurzel_1\coa_{\beta\gamma}(M^\alpha_{11}-\era_\alpha)^2
\end{equation}
\end{linenomath}
%-----------------------------------
We insert \eqref{eq:ma11} and get
%-----------------------------------
\begin{linenomath}
\begin{align*}
\coa_{\alpha\beta}\coa_{\alpha\gamma}&=q^\wurzel_1\big(\coa_{\alpha\beta}\coa_{\alpha\gamma}+\coa_{\beta\gamma}(M^\alpha_{11}-\era_\alpha)^2\big),\\
4\coa_{\alpha\beta}\coa_{\alpha\gamma}\coa_{\beta\gamma}&=q^\wurzel_1\bigg(4\coa_{\alpha\beta}\coa_{\alpha\gamma}\coa_{\beta\gamma}+\big(\coa_{\alpha\beta\gamma}+\sqrt{\chi}\big)^2\bigg),\\
4\coa_{\alpha\beta}\coa_{\alpha\gamma}\coa_{\beta\gamma}&=2q^\wurzel_1\sqrt{\chi}\big(\sqrt{\chi}+\coa_{\alpha\beta\gamma}\big).
\end{align*}
\end{linenomath}
%-----------------------------------
We use the equality 
%-----------------------------------
\begin{linenomath}
\begin{align*}
4\coa_{\alpha\beta}\coa_{\alpha\gamma}\coa_{\beta\gamma}=\chi-\coa_{\alpha\beta\gamma}^2=(\sqrt{\chi}+\coa_{\alpha\beta\gamma})(\sqrt{\chi}-\coa_{\alpha\beta\gamma})
\end{align*}
\end{linenomath}
%-----------------------------------
and the observation that $\sqrt{\chi}(\sqrt{\chi}-\coa_{\alpha\beta\gamma})=0$ if and only if the conditions in \eqref{eq:alg} are violated to get
%-----------------------------------
\begin{linenomath}
\begin{equation}\label{eq:qwurzel1}
q^\wurzel_1=\frac{\sqrt{\chi}-\coa_{\alpha\beta\gamma}}{2\sqrt{\chi}}=\frac{1}{2}-\frac{\coa_{\alpha\beta\gamma}}{2\sqrt{\chi}},
\end{equation}
\end{linenomath}
%-----------------------------------
thus inferring the proposed term for $q^\wurzel_1$. Next we infer the term for $M^\alpha_{01}$. To this end we insert \eqref{eq:ma11} and \eqref{eq:qwurzel1} into \eqref{eq:step1}:
%-----------------------------------
\begin{linenomath}
\begin{align*}
-q^\wurzel_1(M^\alpha_{11}-\era_\alpha)&=(1-q^\wurzel_1)(M^\alpha_{01}-\era_\alpha),\\
%\bigg(\frac{\coa_{\alpha\beta\gamma}}{2\sqrt{\chi}}-\frac{1}{2}\bigg)\bigg(\frac{\coa_{\alpha\beta\gamma}+\sqrt{\chi}}{2\coa_{\beta\gamma}}\bigg)&=\bigg(\frac{\coa_{\alpha\beta\gamma}}{2\sqrt{\chi}}+\frac{1}{2}\bigg)(M^\alpha_{01}-\era_\alpha),\\
(\coa_{\alpha\beta\gamma}-\sqrt{\chi})(\coa_{\alpha\beta\gamma}+\sqrt{\chi})&=2\coa_{\beta\gamma}(\coa_{\alpha\beta\gamma}+\sqrt{\chi})(M^\alpha_{01}-\era_\alpha),\\
M^\alpha_{01}&=\era_\alpha+\frac{\coa_{\alpha\beta\gamma}-\sqrt{\chi}}{2\coa_{\beta\gamma}},
\end{align*}
\end{linenomath}
%-----------------------------------
thus inferring the proposed term. The remaining terms are inferred analogously. This completes the proof.
\end{proof}

%===================================
\begin{proof}[Proof of Theorem \ref{thm:stoch}]
We bound the parameters from \eqref{eq:sol} between 0 and 1:
%-----------------------------------
\begin{linenomath}
\begin{align*}
0&\le\frac{1}{2}-\frac{\coa_{\alpha\beta\gamma}}{2\sqrt{\chi}}\le1,\\
%-1&\le\frac{\coa_{\alpha\beta\gamma}}{\sqrt{\chi}}\le1,\\
-\sqrt{\chi}&\le\coa_{\alpha\beta\gamma}\le\sqrt{\chi},\\
%0&\le\coa_{\alpha\beta\gamma}^2\le\coa_{\alpha\beta\gamma}^2+4\coa_{\alpha\beta}\coa_{\alpha\gamma}\coa_{\beta\gamma},\\
0&\le\coa_{\alpha\beta}\coa_{\alpha\gamma}\coa_{\beta\gamma}.
\end{align*}
\end{linenomath}
%-----------------------------------
With \eqref{eq:alg} this yields positivity for the unconditional covariances. Next we look at $M^\alpha_{01}$ and $M^\alpha_{11}$:
%-----------------------------------
\begin{linenomath}
\begin{align*}
0&\le\era_\alpha+\frac{\coa_{\alpha\beta\gamma}-\sqrt{\chi}}{2\coa_{\beta\gamma}}\le1,\\
-2\era_\alpha\coa_{\beta\gamma}&\le\coa_{\alpha\beta\gamma}-\sqrt{\chi}\le2(1-\era_\alpha)\coa_{\beta\gamma},\\
\coa_{\alpha\beta\gamma}-2(1-\era_\alpha)\coa_{\beta\gamma}&\le\sqrt{\chi}\le\coa_{\alpha\beta\gamma}+2\era_\alpha\coa_{\beta\gamma}
\end{align*}
\end{linenomath}
%-----------------------------------
and
%-----------------------------------
\begin{linenomath}
\begin{align*}
0&\le\era_\alpha+\frac{\coa_{\alpha\beta\gamma}+\sqrt{\chi}}{2\coa_{\beta\gamma}}\le1,\\
-2\era_\alpha\coa_{\beta\gamma}&\le\coa_{\alpha\beta\gamma}+\sqrt{\chi}\le2(1-\era_\alpha)\coa_{\beta\gamma},\\
-(2\era_\alpha\coa_{\beta\gamma}+\coa_{\alpha\beta\gamma})&\le\sqrt{\chi}\le2(1-\era_\alpha)\coa_{\beta\gamma}-\coa_{\alpha\beta\gamma}.
\end{align*}
\end{linenomath}
%-----------------------------------
Squaring both inequalities reduces the four inequalities to the following two:
%-----------------------------------
\begin{linenomath}
\begin{align}
\label{eq:ineq1}\coa_{\alpha\beta\gamma}^2+4\coa_{\alpha\beta}\coa_{\alpha\gamma}\coa_{\beta\gamma}&\le(2\era_\alpha\coa_{\beta\gamma}+\coa_{\alpha\beta\gamma})^2,\\
\label{eq:ineq2}\coa_{\alpha\beta\gamma}^2+4\coa_{\alpha\beta}\coa_{\alpha\gamma}\coa_{\beta\gamma}&\le(2(1-\era_\alpha)\coa_{\beta\gamma}-\coa_{\alpha\beta\gamma})^2.
\end{align}
\end{linenomath}
%-----------------------------------
We look first at inequality \eqref{eq:ineq1} and get
%-----------------------------------
\begin{linenomath}
\begin{align*}
\coa_{\alpha\beta}\coa_{\alpha\gamma}\coa_{\beta\gamma}&\le \era_\alpha^2\coa_{\beta\gamma}^2+\era_\alpha\coa_{\beta\gamma}\coa_{\alpha\beta\gamma},\\
0&\le\era_\alpha(\era_\alpha\coa_{\beta\gamma}+\coa_{\alpha\beta\gamma})-\coa_{\alpha\beta}\coa_{\alpha\gamma},\\
0&\le\era_\alpha\era_{\alpha\beta\gamma}-\era_{\alpha\beta}\era_{\alpha\gamma}=\coa_{\beta\gamma|1}.
\end{align*}
\end{linenomath}
%-----------------------------------
Set $\widehat{\era}_\alpha:=(1-\era_\alpha)=p_{0\str\str}$ and look at \eqref{eq:ineq2}:
%-----------------------------------
\begin{linenomath}
\begin{align*}
\coa_{\alpha\beta}\coa_{\alpha\gamma}\coa_{\beta\gamma}&\le\widehat{\era}_\alpha^2\coa_{\beta\gamma}^2-\widehat{\era}_\alpha\coa_{\beta\gamma}\coa_{\alpha\beta\gamma},\\
0&\le\widehat{\era}_\alpha(\widehat{\era}_\alpha\coa_{\beta\gamma}-\coa_{\alpha\beta\gamma})-\coa_{\alpha\beta}\coa_{\alpha\gamma},\\
0&\le p_{000}p_{011}-p_{001}p_{010}=\coa_{\beta\gamma|0}.
\end{align*}
\end{linenomath}
%-----------------------------------
Hence, we have derived the proposed inequalities.
\end{proof}

%===================================
\begin{proof}[Proof of Proposition \ref{prop:nofullpair}]
The tripod equations \eqref{eq:tripodtotal} imply:
%-----------------------------------
\begin{linenomath}
\begin{equation*}
\chi=\coa_{\alpha\beta\gamma}^2+4\coa_{\alpha\beta}\coa_{\alpha\gamma}\coa_{\beta\gamma}=
(M^\alpha_{11}-M^\alpha_{01})^2(M^\beta_{11}-M^\beta_{01})^2(M^\gamma_{11}-M^\gamma_{01})^2(1-q^\wurzel_1)^2(q^\wurzel_1)^2
\end{equation*}
\end{linenomath}
%-----------------------------------
Together with \eqref{eq:treetriplecov} and \eqref{eq:treepaircov} we see that there is no set of real or complex parameters such that $\chi=0$ but $\coa_{\alpha\beta}\coa_{\alpha\gamma}\coa_{\beta\gamma}\ne0$.
\end{proof}

%===================================
\begin{proof}[Proof of Proposition \ref{prop:degenerates}]
The cases are easily verified by looking at Equation \eqref{eq:treepaircov} and inserting the selected parameters back into \eqref{eq:tripodtotal}.
\end{proof}

%===================================
\begin{proof}[Proof of Proposition \ref{prop:lebesgue}]
The function $\chi:\,\mathbb{C}^8\to\mathbb{C}$ is a nonconstant polynomial mapping. Thus  the set $\{\dens\in\mathbb{R}^8:\,\chi(\dens)=0\}$ is a Lebesgue zero set. The same holds for the set 
%-----------------------------------
\begin{linenomath}
\begin{align*}
\{\dens\in\mathbb{R}^8:\,\coa_{\alpha\beta}(\dens)=0\text{ or }\coa_{\alpha\gamma}(\dens)=0\text{ or }\coa_{\beta\gamma}(\dens)=0\}.
\end{align*}
\end{linenomath}
%-----------------------------------
This completes the proof.
\end{proof}

%===================================
\begin{proof}[Proof of Proposition \ref{prop:algext}]

We recover $\bm{M}^\psi$ by inserting the parameters from \eqref{eq:sol} into \eqref{eq:interior_pars}. To infer the invariants we first look at the equality conditions. We do this representatively by looking at $\overline{\bm{M}}^\alpha=\widetilde{\bm{M}}^\alpha$. In particular we look at
%-----------------------------------
\begin{linenomath}
\begin{align*}
\oM^\alpha_{11}-\oM^\alpha_{01}&=\widetilde{M}^\alpha_{11}-\widetilde{M}^\alpha_{01},\quad
\oM^\alpha_{11}+\oM^\alpha_{01}=\widetilde{M}^\alpha_{11}+\widetilde{M}^\alpha_{01},\\
\end{align*}
\end{linenomath}
%-----------------------------------
and thus
%-----------------------------------
\begin{linenomath}
\begin{align*}
\frac{\sqrt{\chi_{\alpha\beta\gamma}}}{\coa_{\beta\gamma}}&=\frac{\sqrt{\chi_{\alpha\beta\delta}}}{\coa_{\beta\delta}},\quad
\frac{\coa_{\alpha\beta\gamma}}{\coa_{\beta\gamma}}=\frac{\coa_{\alpha\gamma\delta}}{\coa_{\beta\delta}},\\
\frac{\coa_{\alpha\beta\gamma}^2+4\coa_{\alpha\beta}\coa_{\alpha\gamma}\coa_{\beta\gamma}}{\coa_{\beta\gamma}^2}&=\frac{\coa_{\alpha\beta\delta}^2+4\coa_{\alpha\beta}\coa_{\alpha\delta}\coa_{\beta\delta}}{\coa_{\beta\delta}^2},\quad
\frac{\coa_{\alpha\beta\gamma}}{\coa_{\beta\gamma}}=\frac{\coa_{\alpha\gamma\delta}}{\coa_{\beta\delta}},\\
\frac{\coa_{\alpha\gamma}}{\coa_{\beta\gamma}}&=\frac{\coa_{\alpha\delta}}{\coa_{\beta\delta}},\quad
\frac{\coa_{\alpha\beta\gamma}}{\coa_{\beta\gamma}}=\frac{\coa_{\alpha\gamma\delta}}{\coa_{\beta\delta}},\\
\frac{\coa_{\alpha\beta\gamma}}{\coa_{\alpha\beta\delta}}&=\frac{\coa_{\beta\gamma}}{\coa_{\beta\delta}}=\frac{\coa_{\alpha\gamma}}{\coa_{\alpha\delta}}.
\end{align*}
\end{linenomath}
%-----------------------------------
Looking at $\overline{\bm{M}}^\beta=\widetilde{\bm{M}}^\beta$ yields the same equalities. Reproducing the calculations for $\widehat{\bm{M}}^\gamma=\check{\bm{M}}^\gamma$ yields the invariants $f_1$ to $f_3$.

For the inference of $f_0$ observe that the equation system \eqref{eq:leaftotal} can be written in a marginalised form, i.e. one replaces the equations in $(p_{abcd})_{a,b,c,d\in\{0,1\}}$ by the linear transforms $\era_{\alpha\beta\gamma\delta},\,\era_{\alpha\beta\gamma},\,\era_{\alpha\beta\delta},\,\era_{\alpha\gamma\delta},\,\era_{\beta\gamma\delta},\era_{\alpha\beta},\,\era_{\alpha\gamma},\,\era_{\alpha\delta},\,\era_{\beta\gamma},\era_{\beta\delta},\era_{\gamma\delta},\,\era_\alpha,\,\era_\beta,\,\era_{\gamma}$ and $\era_{\delta}$.

We immediately see that all terms but $\era_{\alpha\beta\gamma\delta}$ are covered by our investigation of the tripod case. We insert the parameters obtained in \eqref{eq:sol} and \eqref{eq:interior} into the equation for $\era_{\alpha\beta\gamma\delta}$ to get:
%-----------------------------------
\begin{linenomath}
\begin{align*}
\era_{\alpha\beta\gamma\delta}&=(1-q^\wurzel_1)\oM^\alpha_{01}\oM^\beta_{01}((1-M^\psi_{01})\wM^\gamma_{01}\wM^\delta_{01}+M^\psi_{01}\wM^\gamma_{11}\wM^\delta_{11})\\
&+q^\wurzel_1\oM^\alpha_{11}\oM^\beta_{11}((1-M^\psi_{11})\wM^\gamma_{01}\wM^\delta_{01}+M^\psi_{11}\wM^\gamma_{11}\wM^\delta_{11}).%\\
\end{align*}
\end{linenomath}
%-----------------------------------
Reordering and restructuring this equation eventually yields invariant $f_0$. This completes the proof.
\end{proof}

%===================================
\begin{proof}[Proof of Proposition \ref{prop:stoch}]
Theorem \ref{thm:stoch} covers the first part of the Proposition. The remaining inequalities are obtained by bounding \eqref{eq:interior} between 0 and 1 and use the fact that the covariances are always positive with Lemma \ref{lem:indi}(1):
%-----------------------------------
\begin{linenomath}
\begin{align*}
-1&\le\frac{\coa_{\alpha\delta}\coa_{\alpha\beta\gamma}-\coa_{\alpha\beta}\coa_{\alpha\gamma\delta}-\coa_{\alpha\delta}\sqrt{\chi_{\alpha\beta\gamma}}}{\coa_{\alpha\beta}\sqrt{\chi_{\alpha\gamma\delta}}}\le1,\\
\coa_{\alpha\beta}(\coa_{\alpha\gamma\delta}-\sqrt{\chi_{\alpha\gamma\delta}})&\le
\coa_{\alpha\delta}(\coa_{\alpha\beta\gamma}-\sqrt{\chi_{\alpha\beta\gamma}})\le\coa_{\alpha\beta}(\coa_{\alpha\gamma\delta}+\sqrt{\chi_{\alpha\gamma\delta}}),\\
-1&\le\frac{\coa_{\alpha\delta}\coa_{\alpha\beta\gamma}-\coa_{\alpha\beta}\coa_{\alpha\gamma\delta}+\coa_{\alpha\delta}\sqrt{\chi_{\alpha\beta\gamma}}}{\coa_{\alpha\beta}\sqrt{\chi_{\alpha\gamma\delta}}}\le1,\\
\coa_{\alpha\beta}(\coa_{\alpha\gamma\delta}-\sqrt{\chi_{\alpha\gamma\delta}})&\le
\coa_{\alpha\delta}(\coa_{\alpha\beta\gamma}+\sqrt{\chi_{\alpha\beta\gamma}})\le\coa_{\alpha\beta}(\coa_{\alpha\gamma\delta}+\sqrt{\chi_{\alpha\gamma\delta}}),\\
\end{align*}
\end{linenomath}
%-----------------------------------
\end{proof}
%===================================

\end{document}